\newtheorem{theorem}{Theorem}[section]
\newtheorem{corollary}[theorem]{Corollary}
\newtheorem{proposition}[theorem]{Proposition}
\newtheorem{lemma}[theorem]{Lemma}
\newtheorem{definition}{Definition}[section]
\newcommand{\bA}{\mathbf{A}}
\newcommand{\bB}{\mathbf{B}}
\newcommand{\bC}{\mathbf{C}}
\newcommand{\bI}{\mathbf{I}}
\newcommand{\bM}{\mathbf{M}}
\newcommand{\bP}{\mathbf{P}}
\newcommand{\bS}{\mathbf{S}}
\newcommand{\ga}{\alpha}
\newcommand{\gb}{\beta}
\newcommand{\gc}{\gamma}
\newcommand{\gd}{\delta}
\newcommand{\gl}{\lambda}
\newcommand{\gn}{\nu}
\newcommand{\gp}{\pi}
\newcommand{\gs}{\sigma}
\newcommand{\go}{\omega}
\newcommand{\gL}{\Lambda}
\newcommand{\rf}{\mathrm{f}}
\newcommand{\cA}{\mathcal{A}}
\newcommand{\cB}{\mathcal{B}}
\newcommand{\cC}{\mathcal{C}}
\newcommand{\cH}{\mathcal{H}}
\newcommand{\cL}{\mathcal{L}}
\newcommand{\cP}{\mathcal{P}}
\newcommand{\cS}{\mathcal{S}}
\renewcommand{\iff}{\Leftrightarrow}
\newcommand\ssk{{\bf k}}
\newcommand\sss{{\bf s}}
\newcommand{\gLo}{\Lambda^{\mathrm{o}}} 
\newcommand{\Var}{\mathrm{Var}}  
\newcommand{\Con}{\mathrm{Con}} 
\newcommand{\CPO}{\mathbf{CPO}}
\newcommand{\SD}{\mathbf{SD}} 
\newcommand{\Inf}{\mathbf{Inf}} 
\newcommand{\EATS}{\mbox{\sf EATS}} 
\newcommand{\dom}{\mathrm{do}} 
\newcommand{\fst}{\mathrm{fst}} 
\newcommand{\snd}{\mathrm{snd}} 
\newcommand{\metal}{\mathrel{\lambda\!\!\!\!\!\lambda}\!\!\!}
\newcommand{\Int}[1]{\llbracket #1\rrbracket} 
\newcommand{\st}{:}
\newcommand{\cldn}[1]{{#1}\!\downarrow}
\newcommand{\pow}[1]{{#1}} 
\newcommand{\pro}[1]{\widetilde{#1}} 
\newcommand{\expsp}[2]{#1 \Rightarrow #2} 
\newcommand{\judge}[4]{
	\ifthenelse{\isempty{#4}}
	{
	#1 \rhd_{#2} #3 
	}
	{
	#1 \rhd_{#2} #3 : #4
	}
}
\newcommand{\comp}{\mbox{$\circ$}}
\newcommand{\id}{\mathrm{id}} 
\newcommand{\Th}{{\mbox{Eq}}}
\title{Minimal lambda-theories by ultraproducts}
\author{
Antonio Bucciarelli \qquad Alberto Carraro
\institute{PPS, Universit\'{e} Denis Diderot Paris, France}
\email{\{acarraro,buccia\}@pps.univ-paris-diderot.fr}
\and
Antonino Salibra
\institute{DAIS, Universit\`{a} Ca' Foscari Venezia, Italia
\thanks{Work partially supported by the Fondation de Math\'{e}matique de Paris.}}
\email{salibra@dsi.unive.it}
}
\begin{document}
\maketitle

\begin{abstract}
A longstanding open problem in lambda calculus is whether there exist continuous models of the untyped lambda calculus whose theory is exactly
 the least lambda-theory $\lambda\beta$ or the least sensible lambda-theory $\cH$ (generated by equating all the unsolvable terms). A related question
 is whether, given a class of lambda models, there is a minimal lambda-theory represented by it. In this paper, we give a general tool to answer
 positively to this question and we apply it to a wide class of webbed models: the i-models. The method then applies also to graph models, Krivine models,
 coherent models and filter models. In particular, we build an i-model whose theory is the set of equations satisfied in all i-models.
\end{abstract}


\section{Introduction}

Lambda-theories are congruences on the set of $\gl$-terms which contain $\gb$-conversion,
providing (sound) notions of program equivalence. Models of the $\lambda$-calculus are one of the 
main tools used to study the lattice of $\lambda$-theories. 
After the first model,
 found by Scott in 1969 in the category of complete lattices and Scott continuous functions, a large number of mathematical models for $\gl$-calculus, arising from
 syntax-free constructions, have been introduced in various Cartesian closed categories (ccc, for short) of domains and were classified into semantics according to
 the nature of their representable functions, see e.g. \cite{Barendregt84,Berline00,Plotkin93}.
Scott continuous semantics \cite{Sco72} is the class of reflexive
 cpo-models, that are reflexive objects in the category $\CPO$, whose objects are complete partial orders and morphisms are Scott continuous functions. The stable
 semantics (Berry \cite{Ber78}) and the strongly stable semantics (Bucciarelli\textendash Ehrhard \cite{BucEhr91}) are refinements of the continuous
 semantics, introduced to approximate the notion of ``sequential'' Scott continuous function.

Some models of $\lambda$-calculus, called webbed models, are
built from lower level structures called ``webs'' (see Berline
\cite{Berline00} for an extensive survey). The simplest class of
webbed models is the class of graph models, which was isolated in the
seventies by Plotkin, Scott and Engeler
\cite{Engeler81,Plotkin93,Scott76} within the continuous
semantics. The class of graph models contains the simplest models of
$\gl$-calculus, is itself the easiest describable class, and
represents nevertheless a continuum of (non-extensional)
lambda-theories. Another example of a class of webbed models, and the
most established one, is the class of filter models. It was isolated
at the beginning of the eighties by Barendregt, Coppo and Dezani
\cite{BarendregtCD83}, after the introduction of the intersection type
discipline by Coppo and Dezani \cite{CoppoD80}. Not all filter models
live in Scott continuous semantics: for example some of them lack the
property of representing all continuous functions, and others were
introduced for the stable semantics (see Paolini et al. \cite{PPR09}, Bastonero
 et al. \cite{PravatoBR98}).


 In general, given a class $\cC$ of models, a natural completeness
 problem arises for it: whether the class is \emph{complete}, i.e.,
 for any lambda-theory $T$ there exists a member of $\cC$ whose
 equational theory is $T$. A related question, raised in
 \cite{Berline00} is the following: given a class $\cC$ of models of
 the $\gl$-calculus, is there a minimal lambda-theory represented by
 $\cC$? If this is the case, we say that $\cC$ enjoys the {\em
   minimality property}.  In \cite{DiGianantonioHP95} it was shown
 that the above question admits a positive answer for Scott's
 continuous semantics, at least if we restrict to extensional reflexive
 CPO-models.  Another result, in the same spirit, is the construction
 of a model whose theory is $\lambda\beta\eta$, {\em a fortiori}
 minimal, in the $\omega_1$-semantics (which is different from Scott semantics).  However, the proofs of
 \cite{DiGianantonioHP95} use logical relations, and since logical
 relations do not allow to distinguish terms with the same applicative
 behavior, the proofs do not carry over to non-extensional models.
 Similarly, in \cite{BucciarelliS08}, it is shown that the class of
 graph models enjoys the minimality property.

 In this paper, we propose a method to prove that a given class of
 models enjoys the minimality property, based on two main ingredients:
 the {\em finite intersection property} (fip) and the {\em
   ultraproduct property} (upp). The fip is
 satisfied by a class $\cC$ of models if for all models
 ${\bM}_1,{\bM}_2$ in $\cC$ there exists a model $\bM$ in $\cC$ whose equational 
 theory is included in $Th({\bM}_1)\cap Th({\bM}_2)$. The upp is satisfied in $\cC$ if for every non-empty family
 $\{\bM_i\}_{i\in I}$ of members of $\cC$ and for every proper
 ultrafilter $U$ of sets on $\cP(I)$ the ultraproduct $(\prod_{i\in
   I}\bM_i)/U$ can be embedded into a member of $\cC$. We show in
 Theorem \ref{thm:main-general} that if these conditions are
 satisfied, then $\cC$ has the minimality property.  An important
 technical device used in the proof of Theorem \ref{thm:main-general}
 is Lo\'{s} Theorem: the ultraproduct of a family of models
 satisfies an (in)equation between $\lambda$-terms if and only if the
 set of indexes of the component models satisfying it belongs to the
ultrafilter. Hence, proving the minimality property boils down to
exhibiting an appropriate ultrafilter.

As an application of this  general method, 
we prove that the class of i-models introduced in \cite{CarSal09}  enjoys the
minimality property. First of all, for every pair of i-models
$\bA,\bB$ we construct an i-model $\bC$ such that $Th(\bC)
\subseteq Th(\bA) \cap Th(\bB)$. This result is obtained via a completion process applied to the categorical product of
$\bA$ and $\bB$, adapted from \cite{CarSal09}.
 In order to prove that the class of i-models enjoys the upp, we exploit the fact that i-models are webbed models.
Given an ultraproduct $P$ of i-models, we construct the ultraproduct $P'$ of the corresponding webs. It turns out that $P'$ is a well defined
web. Then we show that there exists an embedding from $P$ to the i-model associated with $P'$.
We also show how our proof can be applied to smaller classes of webbed models,
like graph models, Krivine models, coherent models, and filter models.


Although we know that there exists a minimal i-model, its equational theory has not yet been characterized. Then the results of this paper do not 
 give a solution to the 
 longstanding open problem which asks whether there exist continuous models of the untyped lambda calculus whose theory is exactly the least
 $\lambda$-theory $\lambda\beta$.

The paper is organized as follows. In Section \ref{sec:preliminaries}
we provide the preliminary notions and results needed in the 
rest of the paper, in Section \ref{minimal-models} we present
the general method for showing that a given class of models 
of the $\lambda$-calculus has the minimality property,
and in Section \ref{applications} we apply this method to the
class of i-models.


\section{Preliminaries}\label{sec:preliminaries}
\subsection{Lambda-theories and models of lambda-calculus}\label{lambdacalculus}

With regard to the lambda-calculus we follow the notation and terminology of \cite{Barendregt84}. By $\gL$ and $\gLo$, respectively, we indicate the set of $\gl$-terms and of
 closed $\gl$-terms. We denote $\ga\gb$-conversion by $\gl\gb$. A {\em $\gl$-theory} is a congruence on $\gL$ (with respect to the operators of abstraction and
 application) which contains $\gl\gb$. A $\lambda$-theory is {\em consistent} if it does not equate all $\gl$-terms, {\em inconsistent} otherwise. The set of lambda-theories
 constitutes a complete lattice w.r.t. inclusion, whose top is the inconsistent lambda-theory and whose bottom is the theory $\lambda\beta$. The lambda-theory
 generated by a set $X$ of identities is the intersection of all lambda-theories containing $X$.

It took some time, after Scott gave his model construction, for consensus to arise on the general notion of a model of the $\lambda$-calculus. There are mainly two
 descriptions that one can give: the category-theoretical and the algebraic one. Besides the different languages in which they are formulated, the two approaches are
 intimately connected (see Koymans \cite{Koymans82}). The categorical notion of model, that of reflexive object in a Cartesian closed category (ccc), is well-suited for constructing concrete models, while the algebraic one is
 rather used to understand global properties of models (constructions of new models out of existing ones, closure properties, etc.) and to obtain results about the
 structure of the lattice of $\lambda$-theories. The main algebraic description of models of lambda-calculus is the class of \emph{$\gl$-models}, which are
 axiomatized over combinatory algebras by a finite set of first-order sentences (see Meyer \cite{Mey82}, Scott \cite{Sco80}, Barendregt \cite{Barendregt84}).
 In the following we denote by $\ssk$ and $\sss$ the so-called \emph{basic combinators}.

\subsection{Ultraproducts}\label{models-lam-calc}

Ultraproducts result from a suitable combination of the direct product and quotient constructions. They were introduced in the 1950's by Lo\'{s}.

Let $I$ be a non-empty set and let $\{\bA_i\}_{i \in I}$ be a family of combinatory algebras. Let $U$ be a proper ultrafilter of the boolean algebra $\cP(I)$.
 The relation $\sim_U$, given by $a \sim_U b \iff \{i \in I \st a(i) = b(i)\} \in U$, is a congruence on the combinatory algebra
 $\prod_{i\in I}\bA_i$. The \emph{ultraproduct} of the family $\{\bA_i\}_{i \in I}$, noted $(\prod_{i\in I}\bA_i)/U$, is defined as the quotient of the product
 $\prod_{i\in I}\bA_i$ by the congruence $\sim_U$. 
 If $a \in \prod_{i\in I}\bA_i$, then we denote by $a/U$ the equivalence class of $a$ with respect to the congruence $\sim_U$.
 If all members of $\{\bA_i\}_{i \in I}$ are $\lambda$-models, by a celebrated theorem of Lo\'{s} we have that
 $(\prod_{i\in I}\bA_i)/U$ is a $\lambda$-model too, because $\lambda$-models are axiomatized by first-order sentences.
 The basic combinators of the $\lambda$-model $(\prod_{i\in I}\bA_i)/U$ are $\ssk/U$ and $\sss/U$,
 and application is given by $x/U\cdot y/U = (x\cdot y)/U$, where the application $x\cdot y$ is defined pointwise.

We now recall the famous Lo\'{s} theorem that we will use throughout this paper.

\begin{theorem}[Lo\'{s}]\label{thm:Los}
Let $\cL$ be a first-order language and $\{\bA_i\}_{i\in I}$ be a family of $\cL$-structures indexed by a non-empty set $I$ an let $U$ be a proper ultrafilter
 of $\cP(I)$. Then for every $\cL$-formula $\varphi(x_1,\ldots,x_n)$ and for every tuple $(a_1,\ldots,a_n) \in \prod_{i \in I}\bA_i$ we have that
$$ (\prod_{i \in I}\bA_i)/U \models \varphi(a_1/U,\ldots,a_n/U) \iff \{i \in I \st \bA_i \models \varphi(a_1(i),\ldots,a_n(i))\} \in U. $$
\end{theorem}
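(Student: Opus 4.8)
The plan is to prove the equivalence by induction on the structure of the formula $\varphi$, following the classical argument of Lo\'{s}. Since every first-order formula is logically equivalent to one built from atomic formulas using only $\neg$, $\wedge$ and $\exists$, it suffices to treat these three cases in the inductive step. First I would isolate the corresponding statement for \emph{terms}: for every $\cL$-term $t(x_1,\ldots,x_n)$ and every tuple $(a_1,\ldots,a_n)\in\prod_{i\in I}\bA_i$,
$$ t^{(\prod_{i\in I}\bA_i)/U}(a_1/U,\ldots,a_n/U) = \big(i\mapsto t^{\bA_i}(a_1(i),\ldots,a_n(i))\big)/U. $$
This is established by a routine induction on the construction of $t$, the crucial point being that the ultraproduct interprets each function symbol pointwise on representatives, which is well defined precisely because $\sim_U$ is a congruence.

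For the base case, the atomic formulas, there are two subcases. For an equation $t_1 = t_2$ the claim is exactly the definition of $\sim_U$ combined with the term lemma above. For a relation symbol $R(t_1,\ldots,t_k)$ the claim holds by the very definition of how $R$ is interpreted in the ultraproduct, namely a tuple of equivalence classes satisfies $R$ if and only if the set of indices whose components satisfy $R$ in $\bA_i$ belongs to $U$.

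The inductive step for the connectives splits into two parts. Conjunction $\psi_1\wedge\psi_2$ uses only the filter axioms: since $U$ is upward closed and closed under finite intersections, the set $\{i\st\bA_i\models\psi_1\wedge\psi_2\}$ lies in $U$ if and only if both $\{i\st\bA_i\models\psi_1\}$ and $\{i\st\bA_i\models\psi_2\}$ do, which matches the induction hypothesis. Negation $\neg\psi$ is the step where the full strength of an \emph{ultra}filter is needed: because $U$ is a \emph{proper} ultrafilter, for every $X\subseteq I$ exactly one of $X$ and $I\setminus X$ belongs to $U$, and this is precisely what lets one pass from ``$\{i\st\bA_i\models\psi\}\notin U$'' to ``$\{i\st\bA_i\models\neg\psi\}\in U$''.

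The case I expect to be the main obstacle is the existential quantifier $\exists y\,\psi$. The left-to-right direction is easy: a witness $b/U$ in the ultraproduct gives, by the induction hypothesis, that $\{i\st\bA_i\models\psi(\ldots,b(i))\}\in U$, and this set is contained in $J=\{i\st\bA_i\models\exists y\,\psi\}$, so $J\in U$ by upward closure. The converse is the delicate one, since it requires assembling a single global witness from local ones: on $J\in U$ one must choose, \emph{using the axiom of choice}, some $b(i)\in\bA_i$ with $\bA_i\models\psi(\ldots,b(i))$, and on $I\setminus J$ one picks $b(i)$ arbitrarily. Then $\{i\st\bA_i\models\psi(\ldots,b(i))\}\supseteq J\in U$, so the induction hypothesis makes $b/U$ a witness for $\exists y\,\psi$ in the ultraproduct. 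Thus the only genuinely sensitive points of the whole argument are the appeal to the ultrafilter property in the negation case and the appeal to choice in the existential case; everything else is bookkeeping along the induction.
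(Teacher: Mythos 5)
Your proof is correct: it is the classical Łoś argument, with the term lemma, the atomic case, the filter axioms handling $\wedge$, the maximality of the proper ultrafilter handling $\neg$, and the axiom of choice assembling a global witness for $\exists$. Note, however, that the paper itself gives no proof of this statement at all — it recalls Łoś's theorem as a celebrated classical result and only uses it (in Theorems \ref{thm:main-general} and \ref{compact}) — so there is no internal argument to compare yours against; your write-up would simply be supplying the standard textbook proof that the authors deliberately omit.
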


\subsection{Information systems}\label{sec-information}

Information systems were introduced by Dana Scott in \cite{Sco82} to give a handy representation of Scott domains. An \emph{information system} is a tuple
 $\cA= (A,\Con_A,\vdash_A,\nu_A)$, where $A$ is a set and $\nu_A \in A$, $\Con_A \subseteq \cP_\rf(A)$ is a downward closed
 family containing all singleton subsets of $A$, and $\vdash_A\ \subseteq \Con_A \times A$ satisfies the four axioms listed below:
\begin{itemize}
\item[(I1)] if $a \in \Con_A$ and $a \vdash_A b$, then $a \cup b \in \Con_A$ \qquad (where $a \vdash_A b \stackrel{\mathrm{def}}{=} \forall \beta \in b.\ a \vdash_A \beta$)
\item[(I2)] if $\ga \in a$, then $a \vdash_A \ga$
\item[(I3)] if $a \vdash_A b$ and $b \vdash_A \gamma$, then $a \vdash_A \gamma$
\item[(I4)] $\emptyset \vdash_A \nu_A$
\end{itemize}

We adopt the following notational conventions: letters $\ga,\gb,\gc,\ldots$ are used for elements of $A$ (also called \emph{tokens}); letters $a,b,c,\ldots$ are used
 for elements of $\Con_A$, usually called \emph{consistent sets}; letters $x,y,z,\ldots$ are used for arbitrary elements of $\cP(A)$. We usually drop the
 subscripts from $\Con_A$ and $\vdash_A$ when there is no danger of confusion.

A subset $x \subseteq A$ is \emph{finitely consistent} if each of its finite subsets belongs to $\Con_A$.
 We denote by $\cP_{\mathrm{c}}(A)$ the set of all finitely consistent subsets of $A$. We define an operator $\cldn{}_A : \cP_{\mathrm{c}}(A) \to \cP_{\mathrm{c}}(A)$ by setting
 $\cldn{x}_{A} = \{\ga \in A : \exists a\subseteq_\rf x.\ a \vdash \ga \}$. We may drop the subscript when the underlying information system is clear from the context.
 Note that $\cldn{}$ is a monotone map satisfying the following conditions: $x \subseteq \cldn{x}$;\ $\cldn{\cldn{x}} = \cldn{x}$ and
 $\cldn{x} = \cup_{a \subseteq_\rf x\ } \cldn{a}$.
 We call \emph{point} any subset of $A$ which is in the image of $\cldn{}$. It is well-known that the set of points, partially ordered by
 inclusion, constitutes a Scott domain and any Scott domain is isomorphic to the set of points of some information system.
 

An \emph{approximable relation} between two information systems $\cA,\cB$ is a relation $R \subseteq \Con_A \times B$ satisfying the following properties:
\begin{itemize}
\item[(AR1)] if $a \in \Con_A$ and $a \ R \ b$, then $b \in \Con_B$ \qquad (where $a \ R \ b \stackrel{\mathrm{def}}{=} \forall \gb \in b.\ a \ R \ \gb$)
\item[(AR2)] if $a' \vdash_A a$, $a \ R \ b$, and $b \vdash_B \gb'$, then $a' \ R \ \gb'$.
\end{itemize}

$\Inf$ is the category which has information systems as objects and approximable relations as arrows. The composition of two morphisms
 $R \in \Inf(\cA,\cB)$ and $S \in \Inf(\cB,\cC)$ is (using the meta-notation) their usual relational composition:
 $S \circ R = \{(a,\gamma) \in \Con_A \times C : \exists b \in \Con_B.\ (a,b) \in R \text{ and } (b,\gamma) \in S \}$.
 The identity morphism of an information system $\cA$ is $\vdash_A$.

The Cartesian closed structure of $\Inf$ is described in \cite{Sco82}, and we recall it here for the sake of self-containment.

In what follows we use the projection functions $\fst$ and $\snd$ of a set-theoretic Cartesian product over the first and second component, respectively.
 The same notation is extended to finite subsets of the Cartesian product. For example, $\fst(a) = \{\fst(\ga) \st \ga \in a\}$.

\begin{definition}\label{def:cartesian-prod-inf-sys}
The Cartesian product of $\cA$ and $\cB$ is given by $\cA \binampersand \cB = (A \uplus B,\Con,\vdash,\nu)$ where
\[
\begin{array}{l}
A \uplus B = (\{\nu_A\} \times B) \cup (A \times \{\nu_B\}) \qquad \nu = (\nu_A,\nu_B) \qquad\qquad\qquad\qquad\qquad\qquad\qquad\qquad\qquad\qquad\qquad\qquad \\
a \in \Con \quad \text{ iff } \quad \pow{\fst}(a) \in \Con_A \text{ and } \pow{\snd}(a) \in \Con_B \\
a \vdash \ga \quad\ \ \ \text{ iff } \quad \pow{\fst}(a) \vdash_A \fst(\ga) \text{ and } \pow{\snd}(a) \vdash_B \snd(\ga)
\end{array}
\]
\end{definition}

The terminal object is the information system $\top$ whose underlying set contains only one token.

\begin{definition}\label{def:arrow-inf-sys}
The exponentiation of $\cB$ to $\cA$ is given by $\cA \Rightarrow \cB = (A \Rightarrow B,\Con,\vdash,\nu)$ where
\[
\begin{array}{l}
A \Rightarrow B = \Con_A \times B \qquad \nu = (\emptyset,\nu_B) \qquad\qquad\qquad\qquad\qquad\qquad\qquad\qquad\qquad\qquad\qquad\qquad\qquad \\
\{(a_1,\beta_1),\ldots,(a_k,\beta_k)\} \in \Con \quad\ \text{ iff } \quad \forall I\subseteq [1,k].\ (\cup_{i \in I} a_i \in \Con_A \Rightarrow \{\beta_i : i \in I\} \in \Con_B) \\
\{(a_1,\beta_1),\ldots,(a_k,\beta_k)\} \vdash (c,\gamma) \quad \text{ iff } \quad \{\beta_i : c \vdash_A a_i,\ i \in [1,k] \} \vdash_B \gamma
\end{array}
\]
\end{definition}

The category $\SD$ of Scott domains and Scott continuous functions is equivalent to the category $\Inf$ of information systems, via a pair of mutually
 inverse Cartesian closed functors $(\cdot)^+: \Inf \to \SD$ and $(\cdot)^-: \SD \to \Inf$.

In particular for an information system $\cA$, we have that $\cA^+$, the set of points of an information system, ordered by inclusion,
 is a Scott domain. Moreover, the domains $[\cA^+\to \cB^+]$ and $\cA^+ \times \cB^+$ are isomorphic (in the category $\SD$) to
 the domains $(\cA \Rightarrow \cB)^+$ and $(\cA \binampersand \cB)^+$, respectively.

\subsection{Webbed models of lambda-calculus}\label{webbed-information}



 
Let $\cA,\cB$ be information systems and let $f: A\to B$ be a function. We define two Scott continuous functions $f^\bullet: \cA^+ \to \cB^+$ and $f_\bullet: \cB^+ \to \cA^+$ as follows:
$$
f^\bullet(x) = \cldn{\{ f(\ga) \st \ga \in x \}}_B\ ; \qquad f_\bullet(y) = \cldn{\{\ga \st f(\ga) \in y \}}_A
$$
for every point $x$ of $\cA$ and every point $y$ of $\cB$. In \cite{CarSal09} simple conditions are given under which $f$ can generate a retraction pair
 $(f_\bullet,f^\bullet)$ from $\cA^+$ to $\cB^+$ in the category $\SD$, i.e., $f_\bullet \comp f^\bullet = \id_{\cA^+}$.
 

\begin{definition}[\cite{CarSal09}]\label{def:morph-ISs}
Let $\cA,\cB$ be information systems. A \emph{morphism} from $\cA$ to $\cB$ is a function $f: A \to B$ satisfying the following property:
\[\begin{array}{ll}
\text{(Mo)} & a \in \Con_A \text{ iff } \pow{f}(a) \in \Con_B \qquad\qquad\qquad\qquad\qquad\qquad\qquad\qquad\qquad\qquad\qquad\qquad
\end{array}\]
\end{definition}

\begin{definition}[\cite{CarSal09}]\label{def:b-f-morph-ISs}
A morphism $f: \cA \to \cB$ is a \emph{b-morphism} (resp.\ \emph{f-morphism}) if it satisfies the following property (bMo) (resp.\ (fMo))
\[\begin{array}{ll}
\text{(bMo)} & \text{if } \pow{f}(a) \vdash_B f(\ga) \text{, then } a \vdash_A \ga \qquad\qquad\qquad\qquad\qquad\qquad\qquad\qquad\qquad\qquad\qquad \\
\text{(fMo)} & \text{if } a \vdash_A \ga \text{, then } \pow{f}(a) \vdash_B f(\ga)
\end{array}\]
\end{definition}

The ``b" (resp.\ ``f") in the name of the axiom stands for backward (resp.\ forward). We leave to the reader the easy relativization of the various notions of
 morphism given in Definition \ref{def:b-f-morph-ISs} to the case in which $f$ is a partial map.

\begin{proposition}\label{homoiso}
 Let $f: \cA \to \cB$ be a b-morphism. Then $(f_\bullet,f^\bullet)$ is a retraction pair from $\cA^+$ into $\cB^+$. 
\end{proposition}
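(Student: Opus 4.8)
The plan is to establish the set-theoretic identity $f_\bullet(f^\bullet(x)) = x$ for every point $x$ of $\cA^+$. Since $f_\bullet$ and $f^\bullet$ are already Scott continuous maps between the point-domains, this identity is exactly the retraction condition $f_\bullet \comp f^\bullet = \id_{\cA^+}$, so the whole argument reduces to a double inclusion of sets of tokens. Before the inclusions I would record the bookkeeping supplied by (Mo): since $x$ is a point, every finite $a \subseteq_\rf x$ lies in $\Con_A$, hence $\pow{f}(a) \in \Con_B$; this shows $\pow{f}(x)$ is finitely consistent, so that $f^\bullet(x) = \cldn{\pow{f}(x)}_B$ is a well-defined point of $\cB^+$, and dually that the preimage set $\{\ga \in A : f(\ga) \in f^\bullet(x)\}$ is finitely consistent, so applying $\cldn{}_A$ to it makes sense.

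The inclusion $x \subseteq f_\bullet(f^\bullet(x))$ is immediate and uses nothing beyond the extensivity $y \subseteq \cldn{y}$ of the two closure operators: for $\ga \in x$ we have $f(\ga) \in \pow{f}(x) \subseteq \cldn{\pow{f}(x)}_B = f^\bullet(x)$, so $\ga$ belongs to $\{\gb : f(\gb) \in f^\bullet(x)\}$ and a fortiori to its $\cldn{}_A$-closure, namely $f_\bullet(f^\bullet(x))$. The content of the proposition lies in the reverse inclusion, and this is where I would use the backward condition (bMo). The key sublemma is: \emph{if $x$ is a point and $f(\gb) \in f^\bullet(x)$, then $\gb \in x$.} To prove it, unfold $f(\gb) \in \cldn{\pow{f}(x)}_B$ to obtain a finite $d \subseteq_\rf \pow{f}(x)$ with $d \vdash_B f(\gb)$; choosing for each element of $d$ a preimage in $x$ yields a finite $c \subseteq_\rf x$ with $\pow{f}(c) = d$, so that $\pow{f}(c) \vdash_B f(\gb)$. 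Now (bMo) gives $c \vdash_A \gb$, and since $c \subseteq_\rf x$ this means $\gb \in \cldn{x}_A = x$. Granting the sublemma, take any $\ga \in f_\bullet(f^\bullet(x))$: there is a finite $a \subseteq_\rf \{\gb : f(\gb) \in f^\bullet(x)\}$ with $a \vdash_A \ga$; the sublemma forces $a \subseteq x$, and then $a \vdash_A \ga$ with $a \subseteq_\rf x$ gives $\ga \in \cldn{x}_A = x$.

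The main obstacle is the reverse inclusion, and concretely the step that lifts a finite entailment witnessed in $\cB$ back to one in $\cA$. The delicate point is that $f(\gb) \in f^\bullet(x)$ is an entailment fact in $\cB$ about the \emph{image} $\pow{f}(x)$, and to convert it into the token fact $\gb \in x$ one must first replace the arbitrary finite $d \subseteq \pow{f}(x)$ by an honest image $\pow{f}(c)$ of a finite $c \subseteq x$ (this is the preimage choice giving $\pow{f}(c) = d$), and then invoke (bMo) in exactly the direction $\pow{f}(c) \vdash_B f(\gb) \Rightarrow c \vdash_A \gb$. I expect nothing more than (Mo) and (bMo) to be needed: neither (fMo) nor injectivity of $f$ enters, so the argument applies to any b-morphism, with (Mo) used only to keep all the finite sets involved consistent.
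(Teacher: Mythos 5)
Your proof is correct and takes the same route as the paper: the paper's proof is the single assertion that $f_\bullet \comp f^\bullet = \id_{\cA^+}$ follows from (bMo), and your double-inclusion argument (with the sublemma lifting $\pow{f}(c) \vdash_B f(\gb)$ back to $c \vdash_A \gb$ via (bMo)) is exactly the expansion of that claim, with the consistency bookkeeping via (Mo) correctly supplied.
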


\begin{proof}
From (bMo) it follows $f_\bullet\circ f^\bullet = \id_{\cA^+}$.
\end{proof}

\begin{definition}
An \emph{i-web} is a pair $\bA = (\cA,\phi)$ where $\cA$ is an information system and $\phi:(\expsp{\cA}{\cA})\to\cA$ is a b-morphism.
\end{definition}

The set of tokens of $\cA$ is called the \emph{web} of $\bA$.

\begin{proposition}\label{cor-ref-obj}
Let $\bA = (\cA,\phi)$ be an i-web. Then $\cA^+$ is a reflexive object in the category $\SD$.
\end{proposition}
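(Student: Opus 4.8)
The plan is to realize $\cA^+$ as a reflexive object by exhibiting its function space $[\cA^+ \to \cA^+]$ as a retract of $\cA^+$ inside $\SD$; concretely, I will produce Scott continuous maps $F : [\cA^+ \to \cA^+] \to \cA^+$ (abstraction) and $G : \cA^+ \to [\cA^+ \to \cA^+]$ (application) satisfying $G \comp F = \id_{[\cA^+ \to \cA^+]}$, which is exactly the condition making $\cA^+$ a model of the untyped $\lambda$-calculus.

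First I would feed the b-morphism $\phi : (\expsp{\cA}{\cA}) \to \cA$ into Proposition \ref{homoiso}. This immediately yields a retraction pair $(\phi_\bullet, \phi^\bullet)$ from $(\expsp{\cA}{\cA})^+$ into $\cA^+$ with $\phi_\bullet \comp \phi^\bullet = \id_{(\expsp{\cA}{\cA})^+}$, the embedding $\phi^\bullet$ going \emph{into} $\cA^+$ and the projection $\phi_\bullet$ going back.

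Next I would compose this with the isomorphism $\theta : (\expsp{\cA}{\cA})^+ \to [\cA^+ \to \cA^+]$ furnished by the equivalence $\Inf \simeq \SD$: since the functor $(\cdot)^+$ is Cartesian closed, it sends the internal hom $\expsp{\cA}{\cA}$ to the genuine function space $[\cA^+ \to \cA^+]$, as recalled in Subsection~\ref{sec-information}. Setting $F = \phi^\bullet \comp \theta^{-1}$ and $G = \theta \comp \phi_\bullet$, a one-line computation,
\[
G \comp F = \theta \comp \phi_\bullet \comp \phi^\bullet \comp \theta^{-1} = \theta \comp \theta^{-1} = \id_{[\cA^+ \to \cA^+]},
\]
establishes that $[\cA^+ \to \cA^+]$ is a retract of $\cA^+$, whence $\cA^+$ is reflexive in $\SD$.

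No step here is a genuine obstacle, since the substance is already encapsulated in Proposition \ref{homoiso} and in the recalled $\Inf \simeq \SD$ equivalence. The single point requiring attention is the orientation of the retraction: one must confirm that it is the function space that retracts onto the domain, with the \emph{embedding} landing in $\cA^+$; this is precisely the asymmetry guaranteed by the backward condition (bMo) on $\phi$, and it is what prevents the roles of $\phi_\bullet$ and $\phi^\bullet$ from being interchanged.
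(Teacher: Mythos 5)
Your proof is correct and takes essentially the same route as the paper: apply Proposition~\ref{homoiso} to the b-morphism $\phi$ to get the retraction pair $(\phi_\bullet,\phi^\bullet)$ from $(\expsp{\cA}{\cA})^+$ into $\cA^+$, then transport it along the isomorphism $\theta:(\expsp{\cA}{\cA})^+\to[\cA^+\to\cA^+]$ supplied by the equivalence $\Inf\simeq\SD$. Your explicit composites $F=\phi^\bullet\comp\theta^{-1}$ and $G=\theta\comp\phi_\bullet$ are exactly the paper's retraction pair, written with the compositions in the type-correct order.
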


\begin{proof}
As anticipated, there is a continuous isomorphism $\theta:(\cA\Rightarrow\cA)^+\to [\cA^+\to\cA^+]$ and by Proposition \ref{homoiso} the domain $(\cA\Rightarrow\cA)^+$
 can be embedded into $\cA^+$ via the retraction pair $(\phi_\bullet,\phi^\bullet)$. Therefore $(\theta\circ\phi_\bullet,\theta^{-1}\circ\phi^\bullet)$
 is the desired retraction pair in the category $\SD$.
\end{proof}

We set $\bA^+ = (\cA^+,\theta\circ\phi_\bullet,\theta^{-1}\circ\phi^\bullet)$ and call $\bA^+$ an \emph{i-model}. Of course, since $\bA^+$ is a reflexive object
 in $\SD$, then $\bA^+$ is also a $\lambda$-model and closed $\lambda$-terms are interpreted as elements of $\cA^+$ (i.e.\ as points of $\cA$) 
as follows:
\[
\begin{array}{lcl}
\Int{x}_{\rho}^{\bA^+} & = & \rho(x), \text{ where $\rho$ is any map from $\Var$ into $\cA^+$} \\
\Int{\lambda y.M}_{\rho}^{\bA^+} & = & \cldn{\{\phi(a,\alpha) \st \alpha \in \Int{M}_{\rho[y:= \cldn{a}]}^{\bA^+}\}}_{A} \\
\Int{MN}_{\rho}^{\bA^+} & = & \{\beta \in A \st \exists a \subseteq_\rf \Int{N}_{\rho}^{\bA^+}.\ (a,\beta) \in \cldn{\{(a',\beta') \st \phi(a',\beta') \in \Int{M}_{\rho}^{\bA^+}\}}_{A\Rightarrow A} \}
\end{array}
\]
The $\lambda$-model structure associated to the i-model $\bA^+$ is the following. The basic combinators are
 $\ssk^{\bA^+} = \Int{\lambda xy.x}^{\bA^+}$ and $\sss^{\bA^+} = \Int{\lambda xyz.xz(yz)}^{\bA^+}$, and the application operation is given by
$$ u \cdot z = \{\beta \in A \st \exists a \subseteq_\rf z.\ (a,\beta) \in \cldn{\{(a',\beta') \st \phi(a',\beta')\in u\}}_{A\Rightarrow A} \} $$
for all points $u,z$.

\subsubsection{Well-known instances of i-webs}\label{particular-cases}

An \emph{extended abstract type structure} ($\EATS$, for short, \cite[Def.~1.1]{CopDezHonLon84}) is an algebra $(A,\wedge,\to,\go)$, where
 ``$\wedge$" and ``$\to$" are binary operations and ``$\go$" is a constant, such that $(A,\wedge,\go)$ is a meet-semilattice with
  top element $\go$. In the following $\leq$ denotes the partial order associated with the meet-semilattice structure.
Recall from \cite[Def.~2.12,Thm.~2.13]{CopDezHonLon84} that the filter models living in Scott semantics are obtained by taking the set of filters of $\EATS$s satisfying the
 following condition:
\begin{itemize}
\item[($\ast$)] If $\bigwedge_{i=1}^{n} (\ga_i \to \gb_i) \leq \gc \to \delta$, then $(\bigwedge_{i \in \{ i:  \gc\leq\ga_i\}} \gb_i) \leq \delta$.
\end{itemize} 

Given an $\EATS$ $(A,\wedge,\to,\go)$, the structure $\cA = (A,\cP_\rf(A),\vdash,\go)$, where $a \vdash \ga$ iff $(\bigwedge a) \leq \ga$, is an
 information system.
 
If the $\EATS$ satisfies condition ($\ast$), then the function $\phi: \cP_\rf(A)\times A \to A$ given by $\phi(a,\ga) = (\bigwedge a) \to \ga$
 is a b-morphism, and hence an i-web $\bA=(\cA,\phi)$. The
 corresponding filter model is exactly the i-model $\bA^+$ (see \cite{CarSal12} for the details). 

In Larsen and Winskel \cite{Larsen91} the definition of information system is slightly different: there is no special token $\nu$. We remark that the corresponding
 class of i-models generated by the two definitions is the same. We adopt Scott's original definition just for technical reasons. With Larsen \& Winskel's definition
 we can capture some other known classes of models, as illustrated below.

A \emph{preordered set with coherence} (pc-set, for short) is a triple $(A,\leq,\Bumpeq)$, where $A$ is a non-empty set, $\leq$ is a preorder on $A$ and $\Bumpeq$
 is a coherence (i.e., a reflexive, symmetric relation on $A$) compatible with the preorder (see \cite[Def.~120]{Berline00}). A pc-set ``is" an information system
 $\cA = (A,\cP_\rf^\mathrm{coh}(A),\vdash)$, where $\cP_\rf^\mathrm{coh}(A)$ is the set of finite coherent subsets of $A$ and
 $a \vdash \ga$ iff $\exists \gb \in a.\ \gb \geq \ga$. A \emph{pc-web} (see \cite[Def.~153]{Berline00}) is determined by a pc-set together with a map
 $\phi: \cP_\rf^\mathrm{coh}(A) \times A \to A$ satisfying:
\begin{enumerate}
\item[(1)] $\phi(a, \ga) \Bumpeq \phi(b, \gb)$ iff ($a\cup b\in \cP_\rf^\mathrm{coh}(A) \Rightarrow\! \ga \! \Bumpeq \! \gb $)
\item[(2)] if $\phi(a, \ga) \leq \phi(b, \gb)$, then $\ga \leq \gb$ and ($\forall \gamma\in b\ \exists\gd\in a. \gamma\leq\gd)$.
\end{enumerate}
A pc-web is a particular instance of i-web and properties (1),(2) say exactly that $\phi$ is a b-morphism. Krivine webs \cite[Sec.~5.6.2]{Berline00} are pc-webs in
 which $\Bumpeq\ = A \times A$ (so that $\cP_\rf^\mathrm{coh}(A) = \cP_\rf(A)$). \emph{Total pairs} \cite[Sec.~5.5]{Berline00} are Krivine webs in
 which $\leq$ is the equality: in fact in this the requirement of $\phi$ to be a b-morphism boils down to injectivity.
 Therefore a total pair is simply defined as a set $A$ together with an injection $i_A: \cP_\rf(A) \times A \to A$; the underlying
 information system is $\cA = (A,\cP_\rf(A),\ni)$. The \emph{graph model} associated to the total pair is then the i-model $\bA^+$, obtained
 by taking the powerset of $A$ (see \cite[Def.~120]{Berline00}). There is usually some ambiguity in the terminology since by ``graph model''
 sometimes is meant the total pair (as in \cite{BucSal03}, for example) underlying the model itself.


\section{Minimal models: general results}\label{minimal-models}

Given a class $\cC$ of $\lambda$-models, a natural question to be asked is whether there exists a member
 $\bA$ of $\cC$ such its equational theory, hereafter noted $\Th(\bA)$, is contained in the theories of all other members of $\cC$: one such model
 $\bA$ is called \emph{minimal} in $\cC$. This point was raised
 in print by C. Berline \cite{Berline00} who was mainly referring to the classes of webbed models of $\lambda$-calculus.
 If a positive answer is obtained, usually it is done by purely semantical methods and $\Th(\bA)$ does not need to 
 be characterised in the syntactical sense: this is the case of Di Giannantonio et al. \cite{DiGianantonioHP95}, in which the authors prove
 that the class all extensional reflexive CPOs has a minimal model. Of course if one is able to gather enough information about $\Th(\bA)$,
 then one may be in the position to answer the related completeness question for the class $\cC$: is $\lambda\beta$
 (or $\lambda\beta\eta$) a theory induced by a member of $\cC$? An example of result of this kind can be found again in
 \cite{DiGianantonioHP95}, where the authors construct a model with theory $\lambda\beta\eta$ in the $\omega_1$-semantics.

In this section we give general conditions for a class $\cC$ of $\lambda$-models under which we have the guarantee that 
 $\cC$ has a minimal model. In the forthcoming Section \ref{applications} we apply this general result to the class
 of i-models and some of its well-known classes of models.

\begin{definition}\label{def:fip}
A class $\cC$ of $\lambda$-models {\em has the finite intersection property} (fip, for short) if for every two 
 members $\bA$, $\bB$ of $\cC$, there exists a member $\bC$ of $\cC$ such that $\Th(\bC)\subseteq \Th(\bA)\cap \Th(\bB)$.
\end{definition}

For example the class of all $\lambda$-models has the fip, and in general every class closed under direct products
 has the fip. Every subclass which is axiomatized over the $\lambda$-models by first-order universal sentences has
 the fip, but of course these conditions do not hold in general for the classes of webbed models, e.g. for the
 i-models. We will see that they do hold for the filter models.

The fip is a property which is weaker than the closure under direct products. Of course a class which is closed under 
 arbitrary (non-empty) direct products has a minimal model. The next definition isolates a property that, together with
 the fip, can overcome the lack of direct products and guarantee the existence of minimal models.

\begin{definition}\label{def:up}
A class $\cC$ of $\lambda$-models {\em has the ultraproduct property} (upp, for short) if for every non-empty family
 $\{\bA_i\}_{i\in I}$ of members of $\cC$ and for every proper ultrafilter $U$ of sets on $\cP(I)$  
 the ultraproduct $(\prod_{i\in I}\bA_i)/U$ can be embedded into a member of $\cC$.
\end{definition}

For example the class of all $\lambda$-models has the upp, and in general every class closed under ultraproducts
 has the upp. Every subclass which is axiomatized over the $\lambda$-models by first-order sentences has
 the upp, but of course these conditions do not hold in general for the known classes of webbed models, e.g. for the
 i-models.

\begin{theorem}\label{thm:main-general}
Let $\cC$ be a class of $\lambda$-models having both the fip and the upp. Then $\cC$ has a minimal model.
\end{theorem}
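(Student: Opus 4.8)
The plan is to isolate the candidate minimal theory and then manufacture a single model of $\cC$ realizing it. Set $T_0 = \bigcap_{\bA \in \cC}\Th(\bA)$; as an intersection of $\lambda$-theories this is again a $\lambda$-theory, and by definition $T_0 \subseteq \Th(\bA)$ for every $\bA \in \cC$. It therefore suffices to exhibit \emph{one} member $\bM$ of $\cC$ with $\Th(\bM) \subseteq T_0$, since membership forces the reverse inclusion. Let $\Delta$ be the set of equations refuted somewhere in $\cC$, that is, $\Delta = \{(M,N) : (M,N) \notin T_0\}$. For each $e = (M,N) \in \Delta$ let $\varphi_e$ be the universally quantified first-order equation expressing $M=N$ in the language of $\lambda$-models, so that $\bA \models e$ iff $\bA \models \varphi_e$; by construction $e \in \Delta$ means exactly that some member of $\cC$ satisfies $\neg\varphi_e$. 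The task reduces to building a model in $\cC$ that simultaneously refutes every $e \in \Delta$. (If $\Delta = \emptyset$ every member of $\cC$ has the same, inconsistent, theory and is trivially minimal, so I assume $\Delta \neq \emptyset$.)

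First I would use the fip to convert \emph{finite} families of refutations into a single model. Given a nonempty finite $F \subseteq \Delta$, choose for each $e \in F$ a model $\bA_e \in \cC$ refuting $e$, and apply Definition \ref{def:fip} finitely many times to obtain $\bA_F \in \cC$ with $\Th(\bA_F) \subseteq \bigcap_{e \in F}\Th(\bA_e)$. Since each $\bA_e$ refutes $e$ and theories only shrink, $\bA_F$ refutes every equation in $F$, i.e. $\bA_F \models \neg\varphi_e$ for all $e \in F$.

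Next I would organize these models over the index set $I = \Pf{\Delta} \setminus \{\emptyset\}$ and pass to an ultraproduct. For $e \in \Delta$ put $\hat e = \{F \in I : e \in F\}$. The family $\{\hat e : e \in \Delta\}$ has the finite intersection property in $\cP(I)$, because $\{e_1,\dots,e_n\} \in \hat e_1 \cap \cdots \cap \hat e_n$; hence it extends to a proper ultrafilter $U$ of $\cP(I)$ with $\hat e \in U$ for every $e$. Form the ultraproduct $P = (\prod_{F \in I}\bA_F)/U$, which is again a $\lambda$-model. Fixing $e \in \Delta$, every $F \in \hat e$ contains $e$, so $\bA_F \models \neg\varphi_e$; thus $\{F : \bA_F \models \neg\varphi_e\} \supseteq \hat e \in U$, and Lo\'{s}'s Theorem \ref{thm:Los} gives $P \models \neg\varphi_e$. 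Therefore $P$ refutes \emph{every} equation of $\Delta$, which is precisely the statement $\Th(P) \subseteq T_0$.

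Finally I would return to the class via the ultraproduct property. By Definition \ref{def:up} there is an embedding of $P$ into some $\bM \in \cC$; identifying $P$ with a substructure of $\bM$, and noting that equations are universal first-order sentences and so pass from $\bM$ down to $P$, we get $\Th(\bM) \subseteq \Th(P)$. Chaining with the previous step and with $T_0 \subseteq \Th(\bM)$ (valid because $\bM \in \cC$) yields $T_0 \subseteq \Th(\bM) \subseteq \Th(P) \subseteq T_0$, whence $\Th(\bM) = T_0$ and $\bM$ is minimal in $\cC$. I expect the crux to be the interface between the two hypotheses: the fip alone only kills finitely many equations at once, so the real work is the ultrafilter argument that upgrades these finite refutations into a single ultraproduct refuting all of $\Delta$ simultaneously, while checking that the embedding supplied by the upp preserves universal sentences downward (equivalently, that the refutations established in $P$ are inherited by $\bM$), so that the theory is not enlarged when we re-enter $\cC$.
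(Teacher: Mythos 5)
Your proof is correct and follows essentially the same route as the paper's: index models by nonempty finite sets of refutable equations (each obtained by iterating the fip), extend the family $\hat e$ (the paper's $K_e$) to a proper ultrafilter, refute all of $\Delta$ in the ultraproduct via Lo\'{s}' Theorem, and re-enter $\cC$ via the upp. The only differences are cosmetic, e.g.\ you make explicit the downward transfer of universal sentences along the embedding, which the paper leaves implicit when asserting $\Th(\bB) = \Th(\bP_U)$.
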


\begin{proof}
Let $I$ be the set of all equations $e$ betweeen closed combinatory terms for which there exists a model $\bA$ in $\cC$
 such that $\bA \not\models e$.
 For every $e \in I$, consider the set $K_e =\{J \subseteq_\rf I \st e \in J\}$.
 Since $K_e \cap K_{e'} = \{ J \subseteq_\rf I \st e,e'\in J\} \neq \emptyset$ for all $e,e'\in I$, then there exists a
 non-principal ultrafilter $U$ on $\cP_\rf(\cP_\rf(I))$ containing the family $(K_e \st e \in I)$.  
 By the finite intersection property of the class $\cC$, for every $J \subseteq_\rf I$ there exists a model
 $\bA_J$ in $\cC$ such that $e\not\in \Th(\bA_J)$ for every $e\in J$. 
 Let $\{\bA_J\}_{J \subseteq_\rf I}$ be the family composed by these models and consider the ultraproduct
 $\bP_U = (\prod_{J \subseteq_\rf I}\bA_J)/U$. Let $e \in I$ be a closed equation and let
 $X_e = \{J \subseteq_\rf I \st \bA_J \not\models e\}$. Then $X_e \supseteq K_e \in U$, so that
 $X_e$ belongs to the ultrafilter $U$. Since $e$ is a closed first-order formula, by Lo\'{s} Theorem \ref{thm:Los}
 $\bP_U \not\models e$. Since $e$ was an arbitrary equation in $I$, we have that $\bP_U \not\models e$ for every $e\in I$,
 so that $\Th(\bP_U) \subseteq \bigcap_{\bA \in \cC} \Th(\bA)$.
 Finally, since the class $\cC$ has the ultraproduct property, then there exists a model $\bB$ in $\cC$ such that $\bP_U$
 embeds into $\bB$. Then $\Th(\bB) = \Th(\bP_U) \subseteq \bigcap_{\bA \in \cC} \Th(\bA) \subseteq \Th(\bB)$ and we get
 the desired conclusion.
\end{proof}

\begin{corollary}\label{cor:main-general}
Let $\cC$ be a class of $\lambda$-models which has the fip and is closed under ultraproducts.
 Then $\cC$ has a minimal model.
\end{corollary}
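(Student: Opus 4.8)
The plan is to obtain this corollary as an immediate specialization of Theorem~\ref{thm:main-general}. The observation driving the argument is that closure under ultraproducts is a strictly stronger requirement than the ultraproduct property (upp): the latter only asks that each ultraproduct of members of $\cC$ \emph{embed} into some member of $\cC$, whereas the former guarantees that the ultraproduct \emph{is itself} a member of $\cC$.

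First I would recall the definition of the upp: for every non-empty family $\{\bA_i\}_{i\in I}$ of members of $\cC$ and every proper ultrafilter $U$ on $\cP(I)$, the ultraproduct $(\prod_{i\in I}\bA_i)/U$ must embed into some member of $\cC$. Assuming $\cC$ is closed under ultraproducts, the model $(\prod_{i\in I}\bA_i)/U$ already belongs to $\cC$, and since every $\lambda$-model embeds into itself via the identity map, the required embedding is trivially available (namely, into $(\prod_{i\in I}\bA_i)/U$ itself). Hence $\cC$ enjoys the upp.

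Having shown that $\cC$ satisfies both the fip (by hypothesis) and the upp (just established), I would simply invoke Theorem~\ref{thm:main-general} to conclude that $\cC$ has a minimal model. There is essentially no obstacle here: the only point worth verifying is that the identity map qualifies as an embedding in the intended sense, which is immediate. The corollary is therefore a purely formal consequence of the theorem, recording that the familiar closure-under-ultraproducts hypothesis is a convenient sufficient condition subsuming the more delicate upp.
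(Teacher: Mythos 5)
Your proof is correct and matches the paper's intended argument: the paper itself remarks, right after Definition~\ref{def:up}, that every class closed under ultraproducts has the upp, and the corollary is then an immediate application of Theorem~\ref{thm:main-general}. The identity-embedding observation you make is exactly the (trivial) content of that remark.
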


We conclude the section by giving some other general results that can be proved by just assuming the fip and the upp
 for a class $\cC$ of $\lambda$-models. In particular we prove a compactness theorem for lambda-theories whose
 equations hold in members of $\cC$. We also prove that, if there exists an easy $\lambda$-term in $\cC$, then there exists a continuum of 
 different equational $\cC$-theories. In other words, there are uncountably many different lambda-theories induced by models of the class $\cC$.

\begin{theorem}[Compactness]\label{compact}
Let $\cC$ be a class of $\lambda$-models having the upp, and let $E$ be a set of equations between closed $\lambda$-terms.
 If every finite subset of $E$ is satisfied by a member of $\cC$, then $E$ itself is satisfied by a member of $\cC$.
\end{theorem}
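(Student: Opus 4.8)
The plan is to adapt, essentially verbatim, the ultraproduct construction used in the proof of Theorem~\ref{thm:main-general}, but now indexing by the finite subsets of $E$ rather than by a set of failing equations. First I would take the (non-empty) index set to be $\cP_\rf(E)$, the collection of finite subsets of $E$. For each $e \in E$ put $K_e = \{J \subseteq_\rf E \st e \in J\}$. The family $(K_e \st e \in E)$ has the finite intersection property, since for any $e_1,\dots,e_n \in E$ the finite set $\{e_1,\dots,e_n\}$ lies in $K_{e_1}\cap\cdots\cap K_{e_n}$; hence there is a proper ultrafilter $U$ on $\cP(\cP_\rf(E))$ containing every $K_e$.

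Next I would invoke the hypothesis: for each $J \subseteq_\rf E$ there is a model $\bA_J \in \cC$ satisfying all the equations of $J$. Forming the ultraproduct $\bP_U = (\prod_{J \subseteq_\rf E}\bA_J)/U$, I would show $\bP_U \models e$ for every $e \in E$. Indeed, fixing $e$, observe that $\{J \subseteq_\rf E \st \bA_J \models e\} \supseteq K_e$, because $e \in J$ forces $\bA_J \models e$; as $K_e \in U$ and $U$ is upward closed, this larger set also belongs to $U$. Since $e$ is an equation between closed $\lambda$-terms, hence (after the usual translation into combinatory logic) a closed first-order formula over the signature of $\lambda$-models, Lo\'{s} Theorem~\ref{thm:Los} yields $\bP_U \models e$. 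As $e \in E$ was arbitrary, $\bP_U$ satisfies all of $E$.

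Finally, $\bP_U$ is merely an ultraproduct and need not itself belong to $\cC$, so I would appeal to the ultraproduct property: there is a member $\bB$ of $\cC$ into which $\bP_U$ embeds. Since an embedding of $\lambda$-models preserves the interpretation of closed terms --- exactly the fact used to conclude $\Th(\bB) = \Th(\bP_U)$ in the proof of Theorem~\ref{thm:main-general} --- every closed equation valid in $\bP_U$ is valid in $\bB$. Therefore $\bB \models E$, which is the desired conclusion.

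The argument is a routine compactness-via-ultraproducts, so there is no deep obstacle; the two points demanding a little care are (i) checking that $\{J \st \bA_J \models e\} \in U$, which follows from the inclusion $K_e \subseteq \{J \st \bA_J \models e\}$ together with the upward closure of $U$, and (ii) ensuring that the embedding furnished by the \emph{upp} genuinely transports satisfaction of the (closed) equations in $E$ from $\bP_U$ to $\bB$. Both are immediate from the machinery already in place, the latter being precisely the preservation property of embeddings that is exploited in Theorem~\ref{thm:main-general}.
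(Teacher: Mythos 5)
Your proposal is correct and follows essentially the same route as the paper: index by $\cP_\rf(E)$, pick for each finite subset a model of $\cC$ satisfying it, choose an ultrafilter containing the sets $K_e$, apply Lo\'{s} to the ultraproduct, and finish with the upp. The only cosmetic difference is that you parametrize the sets $K_e$ by single equations $e \in E$ while the paper parametrizes them by finite subsets $e \subseteq_\rf E$; these generate the same filter, and your version makes the finite-intersection check and the Lo\'{s} step slightly more explicit than the paper does.
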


\begin{proof}
For every $e \subseteq_\rf E$, let $K_e = \{d\subseteq_\rf E \st e \subseteq d\}$ and let $\bA_e \in \cC$ be a model
 satisfying $e$. Let $U$ be a proper ultrafilter on $\cP_\rf(\cP_\rf(E))$ containing $K_e$ for every $e\subseteq_{\rf} E$.
 Then the ultraproduct $(\prod_{e \subseteq_\rf E}\bA_e)/U$ satisfies $E$. Finally by the upp there exists a model $\bB$ in $\cC$ 
 such that $(\prod_{e \subseteq_\rf E}\bA_e)/U$ embeds into $\bB$, and thus has the same lambda-theory. We conclude that $\bB$ satisfies $E$.
\end{proof}

Let $\cC$ be a class of $\lambda$-models. A closed $\lambda$-term $M$ is \emph{$\cC$-easy} if for every closed
 $\lambda$-term $N$ there exists a member $\bB$ of $\cC$ such that $\Int{M}^{\bB} = \Int{N}^{\bB}$.

\begin{theorem}\label{easy}
Let $\cC$ be a class of $\lambda$-models having the upp such that there exists a $\cC$-easy $\lambda$-term.
 Then there exist uncountably many $\cC$-theories. 
\end{theorem}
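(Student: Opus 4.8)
The plan is to exploit the Compactness Theorem \ref{compact} to realise, for each subset $S$ of $\mathbb{N}$, a model $\bB_S \in \cC$ whose equational theory records $S$, and then to check that distinct subsets yield distinct theories; since there are $2^{\aleph_0}$ subsets of $\mathbb{N}$, this produces uncountably (in fact continuum) many $\cC$-theories. Fix a $\cC$-easy closed term $M$ and write $\underline{0},\underline{1},\underline{2},\dots$ for the Church numerals. Recall that in any nontrivial $\lambda$-model the Church numerals are pairwise distinct, being separated by the $\lambda$-definable zero-test, whose two possible outputs cannot be identified without collapsing the model. For each $S \subseteq \mathbb{N}$ I would consider the set of equations between closed terms
\[
E_S = \{\, M\,\underline{n} = \underline{1} \st n \in S \,\} \cup \{\, M\,\underline{n} = \underline{0} \st n \notin S \,\}.
\]

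First I would show that every finite subset of $E_S$ is satisfied by a member of $\cC$. A finite subset prescribes the value of $M\,\underline{n}$ at finitely many distinct numerals $\underline{n_1},\dots,\underline{n_k}$, say $M\,\underline{n_i}=\underline{b_i}$ with $b_i\in\{0,1\}$. As any function with finite domain of interest is $\lambda$-definable, there is a closed term $P$ with $P\,\underline{n_i} =_\beta \underline{b_i}$ for $i=1,\dots,k$. By $\cC$-easiness of $M$ there is a model $\bB\in\cC$ with $\Int{M}^{\bB}=\Int{P}^{\bB}$, and since $\lambda$-models validate $\beta$-conversion we obtain $\Int{M\,\underline{n_i}}^{\bB}=\Int{P\,\underline{n_i}}^{\bB}=\Int{\underline{b_i}}^{\bB}$ for each $i$; hence $\bB$ satisfies the chosen finite subset. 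Compactness (Theorem \ref{compact}), available because $\cC$ has the upp, then yields a single model $\bB_S\in\cC$ satisfying all of $E_S$.

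Finally I would verify that $S\neq S'$ implies $\Th(\bB_S)\neq\Th(\bB_{S'})$. Choosing $n$ lying in exactly one of the two sets, say $n\in S\setminus S'$, the equation $e := (M\,\underline{n}=\underline{1})$ belongs to $\Th(\bB_S)$, whereas $\bB_{S'}$ satisfies $M\,\underline{n}=\underline{0}$; as $\underline{0}$ and $\underline{1}$ are distinct in a nontrivial model, $e\notin\Th(\bB_{S'})$. Thus $S\mapsto\Th(\bB_S)$ is injective, and the $2^{\aleph_0}$ subsets of $\mathbb{N}$ furnish uncountably many distinct $\cC$-theories. The step requiring most care is exactly this separation: it rests on the witnessing models being nontrivial, so that the test values remain distinct. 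This is automatic in the setting of interest, where the members of $\cC$ are nontrivial $\lambda$-models (the i-models and their subclasses): the easiness-witnesses are then nontrivial, and the model $\bB_S$ delivered by Compactness is an embedding of an ultraproduct of such witnesses, hence nontrivial by Lo\'{s}'s Theorem \ref{thm:Los} together with the injectivity of the embedding. The only remaining routine point is the finite $\lambda$-definability used to build the interpolating terms $P$.
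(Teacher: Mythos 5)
Your proof is correct, and it shares the paper's overall skeleton: both arguments invoke the Compactness Theorem \ref{compact} (hence the upp), use $\cC$-easiness of $M$ to satisfy each finite subset of an infinite family of equations indexed by an uncountable set, and conclude by showing the map from indices to theories is injective. Where you differ is in the concrete realization. The paper takes as index set the infinite sequences $X=(N_n)_{n\geq 1}$ of closed $\beta\eta$-normal forms, uses the equations $M\pi_n = N_n$ with $\pi_n = \lambda x_1\ldots x_n.x_n$, and witnesses finite satisfiability by an explicit ``tuple'' term $Z=\lambda y.\,y\,\bI\cdots\bI\,N_{n_1}\cdots N_{n_k}$, so that separation of theories ultimately rests on B\"ohm's theorem (distinct closed $\beta\eta$-normal forms cannot be consistently equated). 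You instead index by subsets $S\subseteq\mathbb{N}$, use equations $M\,\underline{n}=\underline{b}$ with $b\in\{0,1\}$, witness finite satisfiability by $\lambda$-definability of finite functions on Church numerals, and separate theories using only $\underline{0}\neq\underline{1}$ in a nontrivial model (the zero-test argument). Your encoding buys two things: the separation step is more elementary (no appeal to B\"ohm's theorem), and the finite-satisfiability step is more robust --- indeed, the paper's explicit term $Z$ as written does not quite do its job, since $Z\pi_{n_i}$ $\beta$-reduces to $N_{n_i}$ \emph{applied to the remaining arguments} rather than to $N_{n_i}$ itself (one must pad, e.g.\ replace $N_{n_i}$ in the tuple by $\lambda z_1\ldots z_{n_k-n_i}.N_{n_i}$), whereas your appeal to definability of finite functions sidesteps this bookkeeping. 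What the paper's version buys in exchange is the slightly stronger intermediate fact that every $M\pi_n$ is itself $\cC$-easy, with theories realizing arbitrary $\beta\eta$-normal values at these terms. Finally, note that the nontriviality issue you flag is genuinely needed in \emph{both} proofs (the paper's injectivity claim silently requires the theories $\Th(\bA_X)$ to be consistent, which fails if $\cC$ admits a one-point model); your explicit treatment of it, via preservation of nontriviality under ultraproducts and embeddings, is a point of care the paper omits.
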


\begin{proof}
Let $M$ be a $\cC$-easy $\lambda$-term. For $n \geq 1$, we let $\gp_n\equiv\gl x_1\ldots x_n.x_n$.
 We prove that for every $n \geq 1$ the term $M\gp_n$ is $\cC$-easy. 

Let $X = (N_n)_{n\geq 1}$ be an arbitrary infinite sequence of closed $\beta\eta$-normal $\lambda$-terms and define
 $E(X) =\{M\gp_n = N_n \st n\geq 1\}$.
 Let $K =\{M\gp_{n_1}=N_{n_1},\ldots,M\gp_{n_k}=N_{n_k}\}$ be a finite subset of $E(X)$. Without loss of generality, we may
 assume that $n_1<\cdots<n_k$. Let $y$ be a fresh variable and define inductively
$$ Z_1 := y\underbrace{\bI\cdots\bI}_{n_1-1}N_{n_1}\ ; \qquad Z_{m+1} := Z_m\underbrace{\bI\cdots\bI}_{n_{m+1}-n_{m}-1}N_{n_m} $$ 
 Now set $Z=\gl y.Z_k$. Since $M$ is $\cC$-easy, then there is a member $\bA$ of $\cC$ such that $\bA \models M = Z$.
 Therefore $\bA \models M\gp_{n_i} = Z\gp_{n_i} = N_{n_i}$ for all $i = 1,\ldots,k$ so that
 $K \subseteq \Th(\bA)$. Since every finite subset of $E(X)$ is satisfied by a member of $\cC$, then by Theorem \ref{compact}
 $E(X)$ itself is satisfied by a member of $\cC$, i.e. there exists a member $\bA_X$ of $\cC$ such that 
 $E(X) \subseteq \Th(\bA_X)$. Moreover if $X$ and $Y$ are two different infinite sequences of closed $\beta\eta$-normal $\lambda$-terms,
 then $\Th(\bA_X) \neq \Th(\bA_Y)$. The result then follows from the fact that there are uncountably many infinite sequences of
 closed $\beta\eta$-normal $\lambda$-terms.
\end{proof}

\section{Applications}\label{applications}

In the present section we apply the general results developed in Section \ref{minimal-models}. In particular we prove that the class of i-models has
 both the finite intersection property and the ultraproduct property. Then we comment on how these general results also apply to other well-known
 classes of webbed models.

\subsection{Finite intersection property for i-models}\label{fip-imodels}

The goal of the first part of this section is to prove that for every pair $\bA_1,\bA_2$ of i-webs there exists an i-web $\bB$ such that
 $\Th(\bB^+) \subseteq \Th(\bA_1^+) \cap \Th(\bA_2^+)$. Such result would be trivial if the categorical product $\cA_1 \binampersand \cA_2$ could always be endowed with
 a suitable structure of i-web, but this is not the case. The best that we can do in general is to make $\cA_1 \binampersand \cA_2$ into a \emph{partial i-web}. A
 partial i-web in general is a pair $\bA = (\cA,\phi_A)$, where $\phi_A: \cA \Rightarrow \cA \rightharpoonup \cA$ is a partial b-morphism. In particular,
 $\bA_1\binampersand\bA_2$ is a partial i-web if we set
 if we can set 
$$
\phi(a,\ga) = 
\begin{cases}
(\nu_{A_1},\nu_{A_2}) & \text{ if } a \subseteq \{(\nu_{A_1},\nu_{A_2})\} \text{ and } \ga = (\nu_{A_1},\nu_{A_2}) \\
(\nu_{A_1},\phi_{A_2}(\pow{\snd}(a),\snd(\ga))) & \text{ if } a \cup \{\ga\} \subseteq_\rf \{\nu_{A_1}\} \times A_2 \\
(\phi_{A_1}(\pow{\fst}(a),\fst(\ga)),\nu_{A_2}) & \text{ if } a \cup \{\ga\} \subseteq_\rf A_1 \times \{\nu_{A_2}\}
\end{cases}
$$
A partial i-web does not give in general an i-model, but we can complete it to an i-web through a limit process that involves countably many extension steps.
 
We say that $\cB$ is an \emph{extension} of $\cS$, notation $\cS \preceq \cB$, if $S \subseteq B$, $\Con_S = \Con_B \cap \cP_\rf(S)$,
 $\vdash_S\ = \ \vdash_B \cap (\Con_S \times S)$. We say that $\bB$ is an extension of $\bS$, notation $\bS \preceq \bB$, if $\cS \preceq \cB$ and $\phi_S$ is the
 restriction of $\phi_B$ to $\Con_S \times S$. 

Let us call $\bB$ the result of the (yet undefined) completion process of $\bA_1\binampersand\bA_2$. Of course $\bB$ must be somehow related to the original i-webs $\bA_1$ and $\bA_2$.
 In particular, we want that for every closed $\gl$-term $M$ if $(\nu_{A_1},\gb) \in \Int{M}^{\bB^+}$ (resp.\ $(\ga,\nu_{A_2}) \in \Int{M}^{\bB^+}$), then $\gb \in \Int{M}^{\bA_2^+}$
 (resp.\ $\ga \in \Int{M}^{\bA_1^+}$) because this will guarantee that $\Th(\bB^+) \subseteq \Th(\bA_1^+) \cap \Th(\bA_2^+)$. We will achieve this property by means of the
 notion of \emph{f-morphism} of partial i-webs.

\textbf{Notation.} Let $f: A \rightharpoonup B$ be a partial function. We write $\dom(f)$ to indicate the domain of $f$
 and $\overline{\dom}(f)$ to indicate the complement of $\dom(f)$ in $A$. We define
 $\pow{f}: \cP_\rf(B) \to \cP_\rf(C)$ and $\pro{f}: (\cP_\rf(B) \times B) \to (\cP_\rf(C) \times C)$ as
 follows: $\pow{f}(b) = \{f(\gb) \mid \gb \in b,\ \gb \in \dom(f) \}$ and $\pro{f}(b,\gb) = (\pow{f}(b),f(\gb))$. Hence
 $\pow{\pro{f}}: \cP_\rf(\cP_\rf(B) \times B) \to \cP_\rf(\cP_\rf(C) \times C)$.

\begin{definition}[\cite{CarSal09}]\label{def:log-map}
Let $\bB,\bC$ be partial i-webs. An \emph{f-morphism from $\bB$ to $\bC$} is an f-morphism $\psi: \cB \to \cC$ satisfying the following additional property:
\[\begin{array}{ll}
\text{(iMo)} & \text{if } (a,\beta) \in \dom(\phi_B) \text{, then } (\pow{\psi}(a),\psi(\beta)) \in \dom(\phi_C) \text{ and } \psi(\phi_B(a,\beta)) = \phi_C(\pow{\psi}(a),\psi(\beta)) \\
\end{array}\]
\end{definition}

The following proposition explains that, in general, f-morphisms of i-webs ``commute'' well to the interpretation of $\lambda$-terms.  

\begin{proposition}[\cite{CarSal09}]\label{prop:f-mor-interp}
Let $\bB,\bC$ be i-webs, let $\psi:\bB \to \bC$ be an f-morphism of i-webs, and let $M$ be a closed $\lambda$-term.
 If $\ga \in \Int{M}^{\bB^+}$, then $\psi(\ga) \in \Int{M}^{\bC^+}$.
\end{proposition}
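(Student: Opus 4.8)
The plan is to prove, by structural induction on $M$ now allowed to range over \emph{open} terms, the following strengthening: for all valuations $\rho:\Var\to\cB^+$ and $\rho':\Var\to\cC^+$ that are \emph{$\psi$-compatible}, meaning $\pow\psi(\rho(x))\subseteq\rho'(x)$ for every variable $x$, one has $\ga\in\Int{M}_{\rho}^{\bB^+}\Rightarrow\psi(\ga)\in\Int{M}_{\rho'}^{\bC^+}$. For a closed $M$ the interpretation does not depend on the valuation, so instantiating with any compatible pair (for instance $\rho'=\psi^\bullet\circ\rho$, which is compatible since $\psi^\bullet(\rho(x))=\cldn{\pow\psi(\rho(x))}_C\supseteq\pow\psi(\rho(x))$) yields the statement of the proposition.

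The variable case is immediate from $\psi$-compatibility. For an abstraction $M=\lambda y.P$, recall $\Int{\lambda y.P}_{\rho}^{\bB^+}=\cldn{X}_B$ with $X=\{\phi_B(a,\alpha)\st\alpha\in\Int{P}_{\rho[y:=\cldn{a}]}^{\bB^+}\}$, and likewise $\Int{\lambda y.P}_{\rho'}^{\bC^+}=\cldn{X'}_C$ with $X'=\{\phi_C(b,\beta)\st\beta\in\Int{P}_{\rho'[y:=\cldn{b}]}^{\bC^+}\}$. Given $\ga\in\cldn{X}_B$ there is a finite $x_0\subseteq_\rf X$ with $x_0\vdash_B\ga$, so by (fMo) $\pow\psi(x_0)\vdash_C\psi(\ga)$; since a point is $\cldn{}$-closed, it then suffices to prove $\pow\psi(X)\subseteq\Int{\lambda y.P}_{\rho'}^{\bC^+}$. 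For a generator $\phi_B(a,\alpha)$ of $X$, property (iMo) gives $\psi(\phi_B(a,\alpha))=\phi_C(\pow\psi(a),\psi(\alpha))$, and the extended valuations $\rho[y:=\cldn{a}]$ and $\rho'[y:=\cldn{\pow\psi(a)}]$ are again $\psi$-compatible: indeed $\delta\in\cldn{a}_B$ means $a\vdash_B\delta$, whence $\pow\psi(a)\vdash_C\psi(\delta)$ by (fMo), i.e.\ $\psi(\delta)\in\cldn{\pow\psi(a)}_C$. The induction hypothesis applied to $P$ then yields $\psi(\alpha)\in\Int{P}_{\rho'[y:=\cldn{\pow\psi(a)}]}^{\bC^+}$, so $\phi_C(\pow\psi(a),\psi(\alpha))\in X'\subseteq\Int{\lambda y.P}_{\rho'}^{\bC^+}$, as required.

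The application case $M=PQ$ is where the real work lies, and I would isolate it as a lemma on the exponential information system: the token map $\pro\psi(a,\beta)=(\pow\psi(a),\psi(\beta))$ carries $\cB\Rightarrow\cB$ to $\cC\Rightarrow\cC$ preserving consistency and entailment, i.e.\ if $w\in\Con_{B\Rightarrow B}$ and $w\vdash_{B\Rightarrow B}(c,\gamma)$ then $\pow{\pro\psi}(w)\in\Con_{C\Rightarrow C}$ and $\pow{\pro\psi}(w)\vdash_{C\Rightarrow C}(\pow\psi(c),\psi(\gamma))$. Unfolding Definition \ref{def:arrow-inf-sys} with $w=\{(a_1,\beta_1),\dots,(a_k,\beta_k)\}$, the entailment amounts to $\{\psi(\beta_i)\st\pow\psi(c)\vdash_C\pow\psi(a_i)\}\vdash_C\psi(\gamma)$; starting from $\{\beta_i\st c\vdash_B a_i\}\vdash_B\gamma$, one applies (fMo) to get $\{\psi(\beta_i)\st c\vdash_B a_i\}\vdash_C\psi(\gamma)$, observes (again by (fMo)) that $c\vdash_B a_i$ forces $\pow\psi(c)\vdash_C\pow\psi(a_i)$ so that the index set only grows, and concludes by monotonicity of $\vdash_C$. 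Granting the lemma, take $\ga\in\Int{PQ}_{\rho}^{\bB^+}$, with witness $a\subseteq_\rf\Int{Q}_{\rho}^{\bB^+}$ and $w_0\subseteq_\rf W$ such that $w_0\vdash_{B\Rightarrow B}(a,\ga)$, where $W=\{(a',\beta')\st\phi_B(a',\beta')\in\Int{P}_{\rho}^{\bB^+}\}$. The induction hypothesis on $Q$ gives $\pow\psi(a)\subseteq_\rf\Int{Q}_{\rho'}^{\bC^+}$, while the induction hypothesis on $P$ together with (iMo) shows $\pow{\pro\psi}(W)\subseteq W'$, where $W'=\{(c',\delta')\st\phi_C(c',\delta')\in\Int{P}_{\rho'}^{\bC^+}\}$. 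The lemma then gives $\pow{\pro\psi}(w_0)\vdash_{C\Rightarrow C}(\pow\psi(a),\psi(\ga))$ with $\pow{\pro\psi}(w_0)\subseteq W'$, so $(\pow\psi(a),\psi(\ga))\in\cldn{W'}_{C\Rightarrow C}$, i.e.\ $\psi(\ga)\in\Int{PQ}_{\rho'}^{\bC^+}$.

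I expect the main obstacle to be the consistency bookkeeping inside the exponential lemma: one must check, using (Mo), that $\pow{\pro\psi}(w)$ is a consistent set of $\cC\Rightarrow\cC$-tokens and that the enlarged premise $\{\psi(\beta_i)\st\pow\psi(c)\vdash_C\pow\psi(a_i)\}$ is itself consistent in $\cC$, so that the monotonicity step (via axioms (I2)--(I3)) is legitimate. These verifications rest on the precise shape of $\Con$ and $\vdash$ for the exponentiation in Definition \ref{def:arrow-inf-sys} and on the fact that $\psi$, satisfying (Mo), both preserves and reflects consistency (so $\pow\psi(\cup_{i\in I}a_i)\in\Con_C$ iff $\cup_{i\in I}a_i\in\Con_B$). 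Everything else is a routine unfolding of the interpretation clauses together with (fMo) and (iMo).
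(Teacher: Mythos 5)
The paper itself gives no proof of this proposition: it is imported verbatim from \cite{CarSal09}, so there is no in-paper argument to compare yours against. Judged on its own, your proof is correct and is the natural one. The strengthening to open terms with $\psi$-compatible valuations is exactly what is needed to make the induction go through (a naive induction on closed terms breaks at binders), and your abstraction and application cases unfold the interpretation clauses correctly, using (fMo) for the entailment steps and (iMo) to transport tokens of the form $\phi_B(a,\alpha)$. The exponential lemma you isolate is indeed the crux, and the consistency obligation you flag does discharge: if $I=\{i\st \pow{\psi}(c)\vdash_C\pow{\psi}(a_i)\}$, then by (I1) one gets $\pow{\psi}(\bigcup_{i\in I}a_i)\in\Con_C$, whence $\bigcup_{i\in I}a_i\in\Con_B$ by reflection of consistency along (Mo), whence $\{\beta_i\st i\in I\}\in\Con_B$ by consistency of $w$ in $\cB\Rightarrow\cB$, and finally $\{\psi(\beta_i)\st i\in I\}\in\Con_C$ by (Mo) again; this legitimizes the monotonicity step. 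The only points left tacit are routine: $w_0\subseteq_\rf W$ is automatically consistent because $\Int{P}_{\rho}^{\bB^+}$ is a point and $\phi_B$ reflects consistency, and $\pow{\psi}(a)\in\Con_C$ wherever you form $\cldn{\pow{\psi}(a)}_C$. None of these is a gap.
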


We remark that the two projection functions $\fst$ and $\snd$ are f-morphisms of partial i-webs from $\bA_1\binampersand\bA_2$ to $\bA_1$ and $\bA_2$, respectively.

Our goal now is to construct a series of triples $\{(\bS_n,\psi_n^1,\psi_n^2)\}_{n\geq 0}$ such that $\bS_{n} \preceq \bS_{n+1}$ and $\psi_{n}^i: \bS_{n} \to \bA_i$
 ($i=1,2$) is an f-morphism of partial i-webs such that $\psi_{n+1}^i$ extends $\psi_{n}^i$ ($i=1,2$). The idea is that the input parameter of the whole construction
 is the triple $(\bS_0,\psi_0^1,\psi_0^2)$ where $\bS_0:=\bA_1 \binampersand \bA_2$, $\psi_0^1 = \fst$, and $\psi_0^2 = \snd$. All subsequent triples are constructed
 via an algorithm that, given $(\bS_n,\psi_n^1,\psi_n^2)$ as input, returns $(\bS_{n+1},\psi_{n+1}^1,\psi_{n+1}^2)$. The union of all partial i-webs and all
 f-morphisms of partial i-webs finally gives an i-web $\bS_\go$ (called \emph{completion}) and two f-morphisms $\psi_\go^i$ ($i=1,2$) of i-webs that allow to show
 that $\Th(\bS_\go^+) \subseteq \Th(\bA_1^+) \cap \Th(\bA_2^+)$.

The $0$-th stage of the completion process, i.e., the triple $(\bS_0,\psi_0^1,\psi_0^2)$ has already been described. Now assuming we reached stage $n$, we show how to
 carry on with stage $n+1$.

\begin{definition}\label{step-of-construction}
\begin{itemize}
\item $S_{n+1} = S_n \cup \overline{\dom}(\phi_{S_n})$
\item $\Con_{S_{n+1}}$ is the smallest family of sets $x \subseteq_\rf S_n \cup \overline{\dom}(\phi_{S_n})$ such that either 
\begin{itemize}
\item[(1)] there exist $a \in \Con_n$ and $X \in \Con_{S_n \Rightarrow S_n}$ such that $X \subseteq \overline{\dom}(\phi_{S_n})$ and $x = a \cup X$ and 
 $\pow{\psi_{n}^i}(a) \cup \pow{\phi_{A_i}}(\pow{\pro{\psi_{n}^i}}(X)) \in \Con_{A_i}$ ($i=1,2$) or
\item[(2)] there exists $X \in \Con_{S_n\Rightarrow S_n}$ such that
 $x \subseteq_\rf (X \cap \overline{\dom}(\phi_{S_n})) \cup \cldn{(\pow{\phi_{S_n}}(X \cap \dom(\phi_{S_n})))}_{S_n}$
\end{itemize}
\item $a \vdash_{S_{n+1}} \alpha$ iff either $a \cap S_n \vdash_{S_n} \alpha$ or $\alpha \in a$
\item $\nu_{S_{n+1}} = \nu_{S_{n}}$
\item $\phi_{S_{n+1}}(a,\alpha) = 
\begin{cases}
\phi_{S_{n}}(a,\alpha) & \text{if } (a,\alpha) \in \dom(\phi_{S_{n}}) \\
(a,\alpha)             & \text{if } (a,\alpha) \in \overline{\dom}(\phi_{S_{n}}) \\
\text{undefined}       & \text{if } (a,\alpha) \in (S_{n+1}\Rightarrow S_{n+1})-(S_{n}\Rightarrow S_{n})
\end{cases}$
\item for $i=1,2$ we set 
$\psi_{n+1}^i(\alpha) = 
\begin{cases}
\psi_{n}^i(\alpha)                            & \text{if } \alpha \in S_n \\
\phi_{A_i}(\pow{\psi_n^i}(b),\psi_n^i(\beta)) & \text{if } \alpha = (b,\beta) \in S_{n+1} - S_n
\end{cases}$
\end{itemize}
\end{definition}

\begin{theorem}\label{step-of-completion}
We have that
\begin{itemize}
\item[(i)]   $\cS_{n+1} = (S_{n+1},\Con_{S_{n+1}},\vdash_{S_{n+1}},\nu_{S_{n+1}})$ is an information system such that $\cS_n \preceq \cS_{n+1}$,
\item[(ii)]  $\bS_{n+1} = (\cS_{n+1},\phi_{S_{n+1}})$ is a partial i-web such that $\bS_n \preceq \bS_{n+1}$,
\item[(iii)] $\psi_{n+1}^i: \bS_{n+1} \to \bA_i$ ($i=1,2$) is an f-morphism of partial i-webs.
\end{itemize}
\end{theorem}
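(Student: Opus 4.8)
The plan is to establish (i), (iii), (ii) in that order, using as inductive hypothesis that at stage $n$ the structure $\cS_n$ is an information system, that $\phi_{S_n}$ is a partial b-morphism, and that each $\psi_n^i:\bS_n\to\bA_i$ is an f-morphism of partial i-webs (recall $\bA_1,\bA_2$ are genuine i-webs, so $\phi_{A_1},\phi_{A_2}$ are total). The observation guiding everything is that the fresh tokens $\overline{\dom}(\phi_{S_n})=S_{n+1}\setminus S_n$ are disjoint from $S_n$, so any $a\subseteq_\rf S_{n+1}$ splits canonically as $(a\cap S_n)\uplus(a\cap\overline{\dom}(\phi_{S_n}))$, and that, after unfolding the clause defining $\psi_{n+1}^i$ on new tokens, condition (1) of Definition \ref{step-of-construction} is \emph{literally} the assertion $\pow{\psi_{n+1}^i}(x)\in\Con_{A_i}$ for $i=1,2$, namely $\pow{\psi_{n+1}^i}(x)=\pow{\psi_n^i}(a)\cup\pow{\phi_{A_i}}(\pow{\pro{\psi_n^i}}(X))$ when $x=a\cup X$; while condition (2) records that $x$ lies below $\pow{\phi_{S_{n+1}}}(X)$ for some consistent $X$ of old exponential tokens.

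For (i), I would first check that $\Con_{S_{n+1}}$ contains all singletons (a singleton $\{\alpha\}$ with $\alpha\in S_n$ satisfies (1) with $a=\{\alpha\}$, $X=\emptyset$; a new $\alpha=(b,\beta)$ satisfies (1) with $a=\emptyset$, $X=\{\alpha\}$, using totality of $\phi_{A_i}$) and is downward closed (subsets of a (1)-set or a (2)-set are again of the same shape, by downward closure of $\Con_{S_n}$, $\Con_{S_n\Rightarrow S_n}$ and $\Con_{A_i}$). Axioms (I2) and (I4) for $\vdash_{S_{n+1}}$ are immediate, and (I3) follows by a short case split on the two disjuncts of $\vdash_{S_{n+1}}$, reducing to (I2)--(I3) in $\cS_n$. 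The extension $\cS_n\preceq\cS_{n+1}$ falls out from disjointness of the new tokens: if $a\in\Con_{S_{n+1}}$ with $a\subseteq S_n$, then the $X$-part of (1) (resp. the $\overline{\dom}$-part of (2)) must vanish, forcing $a\in\Con_{S_n}$, and $a\cap S_n=a$ makes $\vdash_{S_{n+1}}$ restrict to $\vdash_{S_n}$. The only real work is (I1): given $a\in\Con_{S_{n+1}}$ with $a\vdash_{S_{n+1}}b$, one may assume the genuinely new entailed tokens lie in $S_n$, and then argue by cases. If $a$ is a (1)-set $a_0\cup X$, the entailment $a_0\vdash_{S_n}b$ is pushed through each $\psi_n^i$ by (fMo) and monotonicity, and (I1) in $\cA_i$ keeps the enlarged image consistent, so $a\cup b$ is again a (1)-set; if $a$ is a (2)-set, one uses that $\cldn{(\pow{\phi_{S_n}}(X\cap\dom(\phi_{S_n})))}_{S_n}$ is a point, hence closed under $\vdash_{S_n}$, so the entailed tokens stay inside it and $a\cup b$ is again a (2)-set with the same $X$.

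For (iii), well-definedness of $\psi_{n+1}^i$ is clear since $\phi_{A_i}$ is total, and its first clause makes it extend $\psi_n^i$. Property (fMo) splits along the two disjuncts defining $\vdash_{S_{n+1}}$: the case $\alpha\in a$ is (I2) in $\cA_i$, and the case $a\cap S_n\vdash_{S_n}\alpha$ follows from (fMo) of $\psi_n^i$ plus monotonicity, since $\pow{\psi_n^i}(a\cap S_n)\subseteq\pow{\psi_{n+1}^i}(a)$. Property (iMo) also splits: on a token of $\dom(\phi_{S_n})$ it is exactly (iMo) of $\psi_n^i$, while on a new token $(b,\beta)\in\overline{\dom}(\phi_{S_n})$ both sides of the required equation reduce to $\phi_{A_i}(\pow{\psi_n^i}(b),\psi_n^i(\beta))$ --- which is precisely why $\psi_{n+1}^i$ was defined that way on new tokens. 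Finally, the (relativized, forward) morphism condition holds on (1)-sets by the very form of condition (1), and on (2)-sets by computing $\pow{\psi_{n+1}^i}(\pow{\phi_{S_{n+1}}}(X))=\pow{\phi_{A_i}}(\pow{\pro{\psi_n^i}}(X))$ through (iMo) and then using that $\phi_{S_n}$, $\psi_n^i$ and $\phi_{A_i}$ each preserve consistency forward.

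For (ii), the relation $\bS_n\preceq\bS_{n+1}$ is immediate from the first clause defining $\phi_{S_{n+1}}$. That $\phi_{S_{n+1}}$ preserves consistency forward is exactly what condition (2) records, since $\pow{\phi_{S_{n+1}}}(X)=(X\cap\overline{\dom}(\phi_{S_n}))\cup\pow{\phi_{S_n}}(X\cap\dom(\phi_{S_n}))$ for every consistent $X$ of old exponential tokens. I expect the main obstacle to be the backward condition (bMo) for $\phi_{S_{n+1}}$: it forces an analysis of entailment in the exponential $\cS_{n+1}\Rightarrow\cS_{n+1}$ (Definition \ref{def:arrow-inf-sys}) and a proof that an entailment between $\phi_{S_{n+1}}$-images reflects back to an entailment of the original exponential tokens. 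The delicate point is the behaviour of the freshly added tokens in $\overline{\dom}(\phi_{S_n})$, which $\phi_{S_{n+1}}$ sends to themselves and which $\vdash_{S_{n+1}}$ handles essentially by membership; one must verify that this ``free'' behaviour is compatible with the exponential's entailment and creates no spurious reflections, falling back on (bMo) of $\phi_{S_n}$ for the old part. Checking (bMo), together with axiom (I1) of part (i), is where essentially all the content of the theorem resides.
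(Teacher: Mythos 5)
Your plan follows the paper's proof almost step for step: the same induction hypothesis, the same case splits on clauses (1)/(2) of Definition \ref{step-of-construction} and on old versus new tokens, and the same key identity $\pow{\psi_{n+1}^i}(a\cup X)=\pow{\psi_n^i}(a)\cup\pow{\phi_{A_i}}(\pow{\pro{\psi_n^i}}(X))$. Your treatments of (i) and (iii) are essentially the paper's (indeed more explicit about $a\cup b$ in (I1) and about $\cS_n\preceq\cS_{n+1}$), and your sketch of (bMo) is exactly how the paper concludes: for $(a,\ga)\in\dom(\phi_{S_n})$ fall back on (bMo) of $\phi_{S_n}$; for a new token, entailment in $\cS_{n+1}$ can only happen by membership, and since $\phi_{S_{n+1}}$ sends old exponential tokens into $S_n$ and new ones to themselves, the preimage of $(a,\ga)$ in $X$ must be $(a,\ga)$ itself.

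However, part (ii) as you lay it out has a genuine omission. Being a (partial) b-morphism requires the full biconditional (Mo): $X\in\Con_{S_n\Rightarrow S_n}$ if and only if $\pow{\phi_{S_{n+1}}}(X)\in\Con_{S_{n+1}}$. You account only for the forward direction (``exactly what clause (2) records'') and then move to (bMo), asserting that (bMo) and (I1) carry essentially all the content of the theorem. The converse direction is not automatic, and it is in fact the longest computation in the paper's proof of (ii): if $\pow{\phi_{S_{n+1}}}(X)$ entered $\Con_{S_{n+1}}$ via clause (1), one must recognize that then $Y=X\cap\overline{\dom}(\phi_{S_n})$ and $a=\pow{\phi_{S_n}}(X\cap\dom(\phi_{S_n}))$, compute via (iMo) that $\pow{\psi_n^i}(a)\cup\pow{\phi_{A_i}}(\pow{\pro{\psi_n^i}}(Y))=\pow{\phi_{A_i}}(\pow{\pro{\psi_n^i}}(X))$, and then use that $\phi_{A_i}\circ\pro{\psi_n^i}$, being a composition of morphisms, \emph{reflects} consistency, to recover $X\in\Con_{S_n\Rightarrow S_n}$. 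Consistency reflection appears nowhere in your proposal --- every use you make of $\psi_n^i$, $\phi_{S_n}$ and $\phi_{A_i}$ is forward preservation --- so this half of (Mo) would have to be added before part (ii), and hence the theorem, is actually proved.
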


\begin{proof}
\noindent(i) We show that $\cS_{n+1}$ is an information system, checking the properties (I1)-(I4) (see beginning of Section \ref{sec-information}).
\begin{itemize}
\item[(I1)] Suppose $a \in \Con_{S_{n+1}}$ and $a \vdash_{S_{n+1}} b$. 
If $a$ has been added to $\Con_{S_{n+1}}$ by clause (1), then exists $i \in \{1,2\}$, $a' \in \Con_{S_n}$ and $X \in \Con_{S_n \Rightarrow S_n}$ such that
 $X \subseteq \overline{\dom}(\phi_{S_n})$ and $a = a' \cup X$ and $\pow{\psi_{n}^i}(a') \cup \pow{\phi_{A_i}}(\pow{\pro{\psi_{n}^i}}(X)) \in \Con_{A_i}$.
 Since $a \vdash_{S_{n+1}} b$, then $b = b' \cup X$, for some $b' \in \Con_{S_n}$ such that $a' \vdash_{S_{n}} b'$. Now $\psi_n^i$ is a morphism, so that
 $\pow{\psi_{n}^i}(b') \cup \pow{\phi_{A_i}}(\pow{\pro{\psi_{n}^i}}(X)) \in \Con_{A_i}$. Therefore $b$ is added to $\Con_{S_{n+1}}$ by clause (1).

If $a$ has been added to $\Con_{S_{n+1}}$ by clause (2), then also $b$ is added to $\Con_{S_{n+1}}$ by the same clause.

\item[(I2)] If $\ga \in a$, then $a \vdash_{S_{n+1}} \ga$ by definition of $\vdash_{S_{n+1}}$.
\item[(I3)] Suppose $a \vdash_{S_{n+1}} \{\alpha_1,\ldots,\alpha_k\}$ and $\{\alpha_1,\ldots,\alpha_k\} \vdash_{S_{n+1}} \gamma$.
 If $\gamma \in \{\alpha_1,\ldots,\alpha_k\}$ then clearly $a \vdash_{S_{n+1}} \gamma$. Otherwise $\{\alpha_1,\ldots,\alpha_k\} \cap S_n \vdash_{S_{n}} \gamma$
 and since $a \cap S_n \vdash_{S_{n}} \{\alpha_1,\ldots,\alpha_k\} \cap S_n$ we can conclude using the property (I3) of $\cS_n$.
\item[(I4)] Immediate.
\end{itemize}
Finally it is immediate to see that $\cS_n \preceq \cS_{n+1}$.

\noindent(ii) Note that the fact that $\cS_n \preceq \cS_{n+1}$ automatically implies $\cS_n \Rightarrow \cS_n \preceq \cS_{n+1} \Rightarrow \cS_{n+1}$.
 Now we prove that $\phi_{S_{n+1}}: \cS_{n} \Rightarrow \cS_{n} \to \cS_{n+1}$ is a total b-morphism, so that it is automatically a partial b-morphism
 from $\cS_{n+1} \Rightarrow \cS_{n+1}$ to $\cS_{n+1}$. 
\begin{itemize}
\item[(Mo)] We must show that $X \in \Con_{S_n \Rightarrow S_n}$ iff $(X\cap\overline{\dom}(\phi_{S_n}))\cup(\pow{\phi_{S_n}}(X\cap\dom(\phi_{S_n}))) \in \Con_{S_{n+1}}$.
 If $X \in \Con_{S_n \Rightarrow S_n}$, then $(X\cap\overline{\dom}(\phi_{S_n}))\cup(\pow{\phi_{S_n}}(X\cap\dom(\phi_{S_n})))$ is in $\Con_{S_{n+1}}$ by clasuse (2).

Let $x = (X\cap\overline{\dom}(\phi_{S_n}))\cup(\pow{\phi_{S_n}}(X\cap\dom(\phi_{S_n}))) \in \Con_{S_{n+1}}$. If $x$ is added to $\Con_{S_{n+1}}$ by clause (1), then
 there exist $i \in \{1,2\}$, $a \in \Con_n$ and $Y \in \Con_{S_n \Rightarrow S_n}$ such that $Y \subseteq \overline{\dom}(\phi_{S_n})$ and $x = a \cup Y$ and 
 $\pow{\psi_{n}^i}(a) \cup \pow{\phi_{A_i}}(\pow{\pro{\psi_{n}^i}}(Y)) \in \Con_{A_i}$. Therefore $Y = (X\cap\overline{\dom}(\phi_{S_n}))$ and
 $a = (\pow{\phi_{S_n}}(X\cap\dom(\phi_{S_n})))$. Now we have
\[
\begin{array}{lcl}
\pow{\psi_{n}^i}(a) \cup \pow{\phi_{A_i}}(\pow{\pro{\psi_{n}^i}}(Y)) 
 & = & \pow{\psi_{n}^i}((\pow{\phi_{S_n}}(X\cap\dom(\phi_{S_n})))) \cup \pow{\phi_{A_i}}(\pow{\pro{\psi_{n}^i}}((X\cap\overline{\dom}(\phi_{S_n})))) \\
 & = & \pow{\phi_{A_i}}(\pow{\pro{\psi_{n}^i}}((X\cap\dom(\phi_{S_n})))) \cup \pow{\phi_{A_i}}(\pow{\pro{\psi_{n}^i}}((X\cap\overline{\dom}(\phi_{S_n})))) \\
 & = & \pow{\phi_{A_i}}(\pow{\pro{\psi_{n}^i}}(X))
\end{array}
\]
Since $\pow{\psi_{n}^i}(a) \cup \pow{\phi_{A_i}}(\pow{\pro{\psi_{n}^i}}(Y))$ is in $\Con_{A_i}$ by hypothesis, then so is $\pow{\phi_{A_i}}(\pow{\pro{\psi_{n}^i}}(X))$ and since both $\phi_{A_i}$ and $\psi_{n}^i$ are morphisms
 of information systems, then so is their composition $\phi_{A_i} \circ \psi_{n}^i$, meaning that $X \in \Con_{S_n \Rightarrow S_n}$.

If $x$ is added to $\Con_{S_{n+1}}$ by clause (2), then evidently $X \in \Con_{S_n \Rightarrow S_n}$.

\item[(bMo)] We must show that $\pow{\phi_{S_{n+1}}}(X) \vdash_{S_{n+1}} \phi_{S_{n+1}}(a,\ga)$ implies $X \vdash_{S_{n+1}\Rightarrow S_{n+1}} (a,\ga)$.
 There are two cases to be dealt with. If $(a,\ga) \in \dom(\phi_{S_{n}})$, then $\pow{\phi_{S_{n}}}(X) \cap S_n \vdash_{S_n} \phi_{S_n}(a,\ga)$ and we derive
 $\pow{\phi_{S_{n}}}(X \cap \dom(\phi_{S_{n}})) \vdash_{S_n} \phi_{S_{n}}(a,\ga)$ so that by (bMo) for $\phi_{S_n}$ we have that
 $X \cap \dom(\phi_{S_n}) \vdash_{S_n\Rightarrow S_n} (a,\ga)$ and hence $X\vdash_{S_n\Rightarrow S_n} (a,\ga)$.

If $(a,\ga) \not\in \dom(\phi_{S_n})$, then $(a,\ga) = \phi_{S_{n+1}}(a,\ga) \in \pow{\phi_{S_{n+1}}}(X)$, so that $(a,\ga) \in X$ and thus
 $X\vdash_{S_n\Rightarrow S_n} (a,\ga)$.
\end{itemize}

\noindent(iii) Now we prove that $\psi_{n+1}^i$ $(i=1,2)$ is an f-morphism of i-webs.
\begin{itemize}
\item[(Mo)] ($\Rightarrow$) Suppose $x \in \Con_{S_{n+1}}$. We consider the clauses (1) and (2) of the definition of $\Con_{S_{n+1}}$.

If $x$ is added by clause (1), i.e. $x = a \cup X$ for suitable $a$ and $X$, then
 $\pow{\psi_{n+1}^i}(x) = \pow{\psi_{n}^i}(a) \cup \pow{\phi_{A_i}}(\pow{\pro{\psi_n^i}}(X)) \in \Con_{A_i}$, by clause (1) itself.

If $x$ is added by clause (2), then $x \subseteq_\rf (X \cap \overline{\dom}(\phi_{S_n})) \cup \cldn{(\pow{\phi_{S_n}}(X \cap \dom(\phi_{S_n})))}_{S_n}$,
 for some $X \in \Con_{S_n\Rightarrow S_n}$. Now let $y = (X \cap \overline{\dom}(\phi_{S_n})) \cup \pow{\phi_{S_n}}(X \cap \dom(\phi_{S_n}))$. We first observe that
\[
\begin{array}{ll}
\pow{\psi_{n+1}^i}(y) & = \pow{\phi_{A_i}}(\pow{\pro{\psi_{n}^i}}(X \cap \overline{\dom}(\phi_{S_n}))) \cup \pow{\psi_n^i}(\pow{\phi_{S_n}}(X \cap \dom(\phi_{S_n}))) \\
& = \pow{\phi_{A_i}}(\pow{\pro{\psi_n^i}}(X \cap \overline{\dom}(\phi_{S_n}))) \cup \pow{\phi_{A_i}}(\pow{\pro{\psi_n^i}}(X \cap \dom(\phi_{S_n}))) \\
& = \pow{\phi_{A_i}}(\pow{\pro{\psi_n^i}}(X))
\end{array}
\]
This proves that $\pow{\psi_{n+1}^i}(y) \in \Con_{A_i}$. Now using property (fMo) $\psi_{n}^i$ we obtain that $\pow{\psi_{n+1}^i}(y) \vdash_{A_i} \pow{\psi_{n+1}^i}(x)$,
 and hence $\pow{\psi_{n+1}^i}(x) \in \Con_{A_i}$.

($\Leftarrow$) By the very definition of $\Con_{S_{n+1}}$, in particular by the clause (1).

\item[(fMo)] Suppose $a \vdash_{S_{n+1}} \ga$. If $\ga \in a$, then of course $\pow{\psi_{n+1}^i}(a) \vdash_{A_{i}} \psi_{n+1}^i(\ga)$.
 If $a \cap S_n \vdash_{S_{n}} \ga$, then
$$
\pow{\psi_{n+1}^i}(a) = \pow{\psi_{n+1}^i}(a - S_n) \cup \pow{\psi_{n}^i}(a \cap S_n)
                      \vdash_{A_{i}} \pow{\psi_{n}^i}(a \cap S_n)
                      \vdash_{A_{i}} \psi_{n}^i(\ga)
$$
\item[(iMo)] Let $(a,\alpha) \in S_n\Rightarrow S_n$. Then $\psi_{n+1}^i(\phi_{S_{n+1}}(a,\alpha)) = \phi_{A_i}(\pow{\psi_n^i}(a),\psi_n(\alpha)) = \phi_{A_i}(\psi_{n+1}^i(a),\psi_{n+1}^i(\alpha))$,
 by definition of $\psi_{n+1}^i$ and the fact that it extends $\psi_n^i$.
\end{itemize}
\end{proof}

The \emph{completion} of the triple $(\bA_1 \binampersand \bA_2,\pi_1,\pi_2)$ is the triple $(\bS_\go,\psi_\go^1,\psi_\go^2)$, where 
 $\cS_\go = (S_\go,\Con_{S_\go},\vdash_{S_\go},\nu_{S_\go})$ and $\bS_\go = (\cS_\go,\phi_{S_\go})$ are given by the following data:
\[
\begin{array}{lll}
S_\go := \bigcup_{m < \omega} S_m & \Con_{S_\go}:= \bigcup_{m < \omega} \Con_{S_m} & \vdash_{S_\go} := \bigcup_{m < \omega} \vdash_{S_m} \\
\nu_{S_\go} := \nu_{A_1 \binampersand A_2} & \phi_{S_\go} := \bigcup_{m < \omega} \phi_{S_m} & \psi_\go^i := \bigcup_{m < \omega} \psi_m^i\ (i=1,2)
\end{array}
\]

\begin{lemma}\label{lem:completio-final}
$\bS_\go$ is an i-web and $\psi_\go^i: \bS_\go \to \bA_i$ ($i=1,2$) is an f-morphism of i-webs.
\end{lemma}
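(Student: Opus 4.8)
The plan is to reduce almost everything to be checked for $\bS_\go$ and for the maps $\psi_\go^i$ to a single finite stage of the construction, exploiting that each relevant clause---the information-system axioms (I1)--(I4), the conditions (Mo) and (bMo) witnessing that $\phi_{S_\go}$ is a b-morphism, and the conditions (Mo), (fMo), (iMo) of an f-morphism of i-webs---speaks only about finitely many tokens and finitely many consistent sets. Since $S_\go = \bigcup_m S_m$, $\Con_{S_\go} = \bigcup_m \Con_{S_m}$ and $\vdash_{S_\go} = \bigcup_m \vdash_{S_m}$ are directed unions, any finite collection of data witnessing such a clause already lives in some $\bS_n$; and the relations $\bS_n \preceq \bS_{n+1}$ furnished by Theorem~\ref{step-of-completion} guarantee that consistency and entailment are computed identically at stage $n$ and in $\bS_\go$. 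Thus the truth of each clause in $\bS_\go$ is inherited from its truth in the appropriate $\bS_{n+1}$, already established in Theorem~\ref{step-of-completion}. As a first step I would record that $\cS_\go$, being the union of the $\preceq$-chain $(\cS_n)_n$ of information systems, is itself an information system.

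The one place where the colimit must be genuinely analysed rather than merely transferred is its interaction with the exponential. Next I would check that the arrow construction commutes with the union, namely that $\cS_\go \Rightarrow \cS_\go = \bigcup_m (\cS_m \Rightarrow \cS_m)$ and that $\cS_m \Rightarrow \cS_m \preceq \cS_\go \Rightarrow \cS_\go$ for every $m$. This holds because a token $(a,\alpha)$ of $\cS_\go \Rightarrow \cS_\go$ has $a \in \Con_{S_\go}$ finite and $\alpha \in S_\go$, so both appear at some finite stage, and the consistency and entailment conditions of Definition~\ref{def:arrow-inf-sys} again involve only finitely many tokens.

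The crux is the totality of $\phi_{S_\go}$, on which the promotion from partial i-web to i-web depends. Here I would use the self-referential shape of Definition~\ref{step-of-construction}: every pair left undefined at stage $n$, that is every element of $\overline{\dom}(\phi_{S_n})$, is adjoined to $S_{n+1}$ as a fresh token and is sent to itself by $\phi_{S_{n+1}}$. Consequently $\dom(\phi_{S_{n+1}})$ is exactly the full token set of $\cS_n \Rightarrow \cS_n$, whence $\dom(\phi_{S_\go}) = \bigcup_m \dom(\phi_{S_m}) = \bigcup_m (\cS_m \Rightarrow \cS_m) = \cS_\go \Rightarrow \cS_\go$; that is, $\phi_{S_\go}$ is total. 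Moreover the first clause defining $\phi_{S_{n+1}}$ makes it agree with $\phi_{S_n}$ on $\dom(\phi_{S_n})$, so the partial maps $\phi_{S_m}$ form a coherent chain whose union $\phi_{S_\go}$ restricts to $\phi_{S_{n+1}}$ on $\cS_n \Rightarrow \cS_n$. The conditions (Mo) and (bMo) for $\phi_{S_\go}$ then follow by the locality principle above: any finite witness lies in some $\cS_n \Rightarrow \cS_n$, where $\phi_{S_{n+1}}$ is already a total b-morphism by Theorem~\ref{step-of-completion}(ii). This establishes that $\bS_\go$ is an i-web.

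Finally, for the maps, the family $(\psi_m^i)_m$ is coherent by construction (each $\psi_{n+1}^i$ extends $\psi_n^i$), so $\psi_\go^i = \bigcup_m \psi_m^i$ is a well-defined total map $\cS_\go \to \cA_i$ agreeing with each $\psi_{n+1}^i$ on $S_n$. I would then verify (Mo), (fMo) and (iMo) for $\psi_\go^i$ exactly as before, by locating a stage $n$ containing the finitely many tokens involved and invoking Theorem~\ref{step-of-completion}(iii); in particular (iMo) now holds at every pair $(a,\beta)$ of $\cS_\go \Rightarrow \cS_\go$ precisely because $\phi_{S_\go}$ has been shown total. Hence each $\psi_\go^i : \bS_\go \to \bA_i$ is an f-morphism of i-webs. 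The only genuinely non-routine ingredients are the totality argument and the commutation of $\Rightarrow$ with the union; the remaining verifications are finite-character bookkeeping powered by the $\preceq$-relations of Theorem~\ref{step-of-completion}.
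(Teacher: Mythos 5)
Your proof is correct and follows essentially the same route as the paper's own (much terser) argument: the paper likewise derives everything from Theorem \ref{step-of-completion} together with the union structure, merely asserting that totality of $\phi_{S_\go}$ and the transfer of (Mo), (bMo), (fMo), (iMo) to the limit are ``easy.'' Your write-up simply fills in the details the paper leaves implicit --- the finite-character/locality reduction to a single stage, the identity $\dom(\phi_{S_{n+1}}) = S_n \Rightarrow S_n$ giving totality, and the commutation of $\Rightarrow$ with the union --- all of which are accurate.
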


\begin{proof}
Indeed $\cS_\go$ is an information system as a consequence of Theorem \ref{step-of-completion}(i). Moreover the map $\phi_{S_\go}$ is total and it is easy to prove that it is a b-morphism from $\cS_\go \Rightarrow \cS_\go$
 using the fact that for every $n$ the map $\phi_{S_{n+1}}$ is a partial b-morphism (Theorem \ref{step-of-completion}(ii)). Similarly one can prove that $\psi_\go^i$ is an f-morphism of i-webs from
 $\bS_\go$ to $\bA_i$ ($i=1,2$) simply using the fact that for every $n$ the map $\psi_{n}^i$ is an f-morphism from the partial i-web $\bS_n$ to the i-webs
 $\bA_i$ ($i=1,2$) (Theorem \ref{step-of-completion}(iii)).
\end{proof}

\begin{theorem}\label{theory-contained}
$\Th(\bS_\go^+) \subseteq \Th(\bA_1^+) \cap \Th(\bA_2^+)$.
\end{theorem}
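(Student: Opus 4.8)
The plan is to reduce everything to Proposition \ref{prop:f-mor-interp}, applied to f-morphisms running in \emph{both} directions between $\bS_\go$ and each $\bA_i$. By Lemma \ref{lem:completio-final} we already have the f-morphisms $\psi_\go^i : \bS_\go \to \bA_i$, so for every closed $\lambda$-term $M$ and every $\ga \in \Int{M}^{\bS_\go^+}$ we get $\psi_\go^i(\ga) \in \Int{M}^{\bA_i^+}$. This ``soundness'' direction alone does not suffice: from $\Int{M}^{\bS_\go^+} = \Int{N}^{\bS_\go^+}$ it only yields $\psi_\go^i(\Int{M}^{\bS_\go^+}) = \psi_\go^i(\Int{N}^{\bS_\go^+})$, a common subset of $\Int{M}^{\bA_i^+}$ and $\Int{N}^{\bA_i^+}$, not their equality. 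To supply the missing direction I would introduce the two injections $j_1,j_2$ given by $j_1(\ga) = (\ga,\nu_{A_2})$ and $j_2(\gb) = (\nu_{A_1},\gb)$, i.e.\ the evident inclusions of the tokens of $\bA_1$ and $\bA_2$ into the base web $S_0 \subseteq S_\go$ of the product $\bA_1 \binampersand \bA_2$. Note that $\psi_\go^i \circ j_i = \id_{A_i}$, since $\psi_\go^i$ restricts on the base tokens to the projection $\pi_i$.

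The crux is to verify that each $j_i : \bA_i \to \bS_\go$ is itself an f-morphism of i-webs. Because $j_i$ takes values in the base set $S_0$ and because $\cS_0 \preceq \cS_\go$ — so that $\Con_{S_\go}$, $\vdash_{S_\go}$ and $\phi_{S_\go}$ all restrict on $S_0$ to the original $\Con_{S_0}$, $\vdash_{S_0}$, $\phi_{S_0}$ of the product — the verification collapses to a computation inside $\bA_1 \binampersand \bA_2$. Property (Mo) follows from the definition of $\Con$ in the product together with the fact that singletons (in particular $\{\nu_{A_j}\}$) are always consistent; (fMo) follows similarly, using $\nu_{A_j} \vdash_{A_j} \nu_{A_j}$ in the inactive coordinate; and (iMo) is the key point: for $a \in \Con_{A_1}$ and $\ga \in A_1$ one has $\pow{j_1}(a) \cup \{j_1(\ga)\} \subseteq_\rf A_1 \times \{\nu_{A_2}\}$, so the third clause of the partial product map applies and gives $\phi_{S_\go}(\pow{j_1}(a),j_1(\ga)) = (\phi_{A_1}(a,\ga),\nu_{A_2}) = j_1(\phi_{A_1}(a,\ga))$, exactly the required commutation; $j_2$ is symmetric via the second clause.

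With both families of f-morphisms in hand the conclusion is immediate. Fix closed $M, N$ with $\Int{M}^{\bS_\go^+} = \Int{N}^{\bS_\go^+}$ and fix $i \in \{1,2\}$. For $\ga \in \Int{M}^{\bA_i^+}$, Proposition \ref{prop:f-mor-interp} applied to $j_i$ gives $j_i(\ga) \in \Int{M}^{\bS_\go^+} = \Int{N}^{\bS_\go^+}$; applying it to $\psi_\go^i$ and using $\psi_\go^i(j_i(\ga)) = \ga$ yields $\ga \in \Int{N}^{\bA_i^+}$. Hence $\Int{M}^{\bA_i^+} \subseteq \Int{N}^{\bA_i^+}$, and by the symmetric argument $\Int{M}^{\bA_i^+} = \Int{N}^{\bA_i^+}$, so $M = N \in \Th(\bA_i^+)$. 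As $i$ was arbitrary, $\Th(\bS_\go^+) \subseteq \Th(\bA_1^+) \cap \Th(\bA_2^+)$. I expect the only genuine work to be the verification that the $j_i$ are f-morphisms, and within it the check of (iMo); once that reverse direction is secured, the theorem is just two applications of Proposition \ref{prop:f-mor-interp} composed through the retraction identity $\psi_\go^i \circ j_i = \id_{A_i}$.
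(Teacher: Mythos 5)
Your proof is correct and takes essentially the same route as the paper: the paper's own argument likewise combines the f-morphisms $\psi_\go^i$ with Proposition \ref{prop:f-mor-interp} for the backward direction and the lifting of tokens $\ga \in \Int{M}^{\bA_1^+}$ to $(\ga,\nu_{A_2}) \in \Int{M}^{\bS_\go^+}$ for the forward direction, only phrased contrapositively (a witness token refuting $M=N$ in $\bA_1^+$ is lifted to a witness refuting it in $\bS_\go^+$, using exactly your identity $\psi_\go^1(\ga,\nu_{A_2})=\ga$). The one difference is that the paper dismisses the lifting step with ``it is not difficult to check\ldots since $\bS_\go$ extends $\bA_1 \binampersand \bA_2$'', whereas you actually prove it, by verifying that the injections $j_i$ are f-morphisms of i-webs and invoking Proposition \ref{prop:f-mor-interp} a second time --- a clean way of supplying the detail the paper leaves implicit.
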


\begin{proof}
Suppose $M = N \not\in \Th(\bA_1^+) \cap \Th(\bA_2^+)$. Suppose, w.l.o.g., that $M = N \not\in \Th(\bA_1^+)$. Then there exists $\ga \in A_1$ such that
 $\ga \in \Int{M}^{\bA_1^+} - \Int{N}^{\bA_2^+}$. It is not difficult to check that $\ga \in \Int{M}^{\bA_1^+}$ implies $(\ga,\nu_{A_2}) \in \Int{M}^{\bS_\go^+}$, since
 $\bS_\go$ extends $\bA_1 \binampersand \bA_2$. Now suppose, by way of contradiction, that $(\ga,\nu_{A_2}) \in \Int{N}^{\bS_\go^+}$. Since
 $\psi_\go^1(\ga,\nu_{A_2}) = \ga$, by Proposition \ref{prop:f-mor-interp} we have that $\ga \in \Int{N}^{\bA_1^+}$, which is a contradiction.
 This proves that $(\ga,\nu_{A_2}) \in \Int{M}^{\bS_\go^+}-\Int{N}^{\bS_\go^+}$, so that $M = N \not\in \Th(\bS_\go^+)$.
\end{proof}

In Section \ref{particular-cases} we indicate how some of the most known classes of webbed models are recovered as particular instances of i-models (more details for Filter Models
 are in \cite{CarSal12}). Along these lines the notion of partial i-web generalizes those of \emph{partial pair} \cite{Berline00} (related to graph models)
 as well as the notions of partial webs of the other types.

The idea of partial pair and of a completion for obtaining a graph model generalizes the construction of the Engeler model and the of the Plotkin\textendash Scott $\cP_\go$
 model. It was initiated by Longo in \cite{Longo83} and further developed and applied by Kerth \cite{Kerth98b}.
 Definition \ref{step-of-construction} is the core of a completion of i-webs that further generalizes Longo and Kerth's work.
 As such, it can be adapted case by case so that the entire completion adapts to the various instances of i-webs in the sense that if we start
 with partial pair, at the end we obtain a total pair, if we start with a partial pcs-web, we end up in a total pcs-web etc.

Of course Theorem \ref{theory-contained} proves the finite intersection property for the class of i-models, but in view of the above discussion
 it can also give proofs of the finite intersection property for the subclasses of models mentioned in section \ref{particular-cases}.
 
For the particular case of graph models the fip was proved by Bucciarelli\&Salibra \cite{BucciarelliS08,BucSal03}, via a construction that they call
 \emph{weak product} which has the same spirit of our completion method. For the other classes of models the fip was not known to hold.
For the particular case of filter models one may prove the fip as a simple consequence of the closure of filter models under the contruction
 of direct products, a result that does not appear in the literature and we do not sketch here.

\subsection{Ultraproduct property for i-models}\label{upp-imodels}

In this subsection we deal with the ultraproduct property for the class of i-models: for every non-empty family $\{\bA_i\}_{i \in I}$ of i-webs and 
  every ultrafilter $U$ on $\cP(I)$ the ultraproduct $(\prod_{i\in I}\bA_i^+)/U$ can be embedded into an i-model.
 
Let $J$ be a non-empty set and let $\{\cA_j\}_{j\in J}$ be a family of information systems and let $U$ be a proper ultrafilter on $\cP(J)$. Define
 a binary relation $\theta_U$ on $\prod_{j\in J}A_j$ by setting $(\alpha,\beta) \in \theta_U \iff \{j \in J \st \alpha(j) = \beta(j)\} \in U$. Note
 that $\theta_U$ is an equivalence relation on $\prod_{j\in J}A_j$; we write
 $(\prod_{j\in J}A_j)/U$ for the quotient of $\prod_{j\in J}A_j$ by $\theta_U$.
 As a matter of notation, for every $\ga \in \prod_{j\in J}A_j$ we let 
 $\ga/U = \{\gb \in \prod_{j\in J}A_j \st (\ga,\gb) \in \theta_U\}$ and for every finite subset $a \subseteq_\rf \prod_{j\in J}A_j$, we let
 $a/U = \{\ga/U \st \alpha \in a\}$, i.e., $a/U$ is the finite subset of
 $(\prod_{j\in J}A_j)/U$ constituted by the $\theta_U$-equivalence classes of the tokens of $a$. Since each element $\ga \in a$ is a $J$-indexed
 sequence, we denote by $\ga(j)$ the $j$-th projection of $\ga$ and we let $a(j) = \{\ga(j) \st \ga \in a\}$.

\begin{definition}\label{def:ultra-i-web}
We define an information system $\cP_U = (P_U,\Con_U,\vdash_U,\nu_U)$ as follows:
\[
\begin{array}{l}
P_U = (\prod_{j\in J}A_j)/U \qquad\qquad\qquad\qquad\qquad\qquad\qquad\qquad\qquad\qquad\qquad\qquad \\
\nu_U = (\metal j.\nu_{A_j})/U  \\
a/U \in \Con_U            \quad \text{ iff } \quad \{j \in J \st a(j) \in \Con_{A_j} \} \in U \\
a/U \vdash_U \ga/U \quad\ \ \ \text{ iff } \quad \{j \in J \st a(j) \vdash_{A_j} \ga(j) \} \in U
\end{array}
\] 
We also define an i-web $\bP_U = (\cP_U,\phi_{P_U})$ by setting $\phi_{P_U}(a/U,\alpha/U) = (\metal j.\phi_{A_j}(a(j),\alpha(j)))/U$.
\end{definition}

We leave to the reader the easy verification of the fact that $\cP_U$ and $\bP_U$ indeed are an information system and an i-web, respectively.

We conclude the second main theorem of the section, the one that deals with the ultraproduct property. Let $\{\bA_j\}_{j\in J}$ be a family of i-webs, let $U$ be an ultrafilter over $\cP(J)$ and let $\bP_U$ be the i-web of Definition \ref{def:ultra-i-web}.
 Since $\bP_U$ is an i-web, then $\bP_U^+$ is a reflexive Scott domain and hence a $\lambda$-model. On the other hand each i-web $\bA_j$ gives rise to a reflexive
 Scott domain $\bA_j^+$, which is a $\lambda$-model. Then $(\prod_{j\in J}\bA_j^+)/U$ is an ultraproduct of $\lambda$-models, and thus again a
$\lambda$-model. 

\begin{theorem}\label{thm:embedding}
There exists an embedding of combinatory algebras from the $\lambda$-model $(\prod_{j\in J}\bA_j^+)/U$ into the $\lambda$-model $\bP_U^+$.
\end{theorem}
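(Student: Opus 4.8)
The plan is to exhibit the embedding explicitly as the pointwise‑membership map and then verify that it is an injective homomorphism of combinatory algebras. Concretely, writing a point of $\prod_{j\in J}\bA_j^+$ as $x=(x_j)_{j\in J}$ (each $x_j$ a point of $\cA_j$), I would set
\[
\Phi(x/U) = \{\alpha/U \in P_U \st \{j\in J \st \alpha(j)\in x_j\}\in U\}.
\]
First I would check that this is well defined: if $x/U=y/U$ and $\alpha/U=\beta/U$, the defining index sets change only off a member of $U$, so membership is unaffected because $U$ is a filter. Next I would show $\Phi(x/U)$ is genuinely a point of $\cP_U$: if $a/U\subseteq_\rf\Phi(x/U)$, then intersecting the finitely many members of $U$ witnessing each token of $a$ gives $\{j\st a(j)\subseteq x_j\}\in U$, whence $a/U\in\Con_U$; combining this with $a/U\vdash_U\gamma/U$ and the fact that each $x_j$ is $\cldn{}$‑closed yields $\gamma/U\in\Phi(x/U)$, so $\Phi(x/U)=\cldn{\Phi(x/U)}_{P_U}$.

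Injectivity is where I would use that $U$ is an ultrafilter, not merely a filter. If $x/U\neq y/U$ then $W=\{j\st x_j\neq y_j\}\in U$; writing $W=W_1\cup W_2$ with $W_1=\{j\in W\st x_j\not\subseteq y_j\}$ and $W_2=\{j\in W\st y_j\not\subseteq x_j\}$, one of $W_1,W_2$ lies in $U$. Choosing on that piece a separating token $\alpha(j)\in x_j\setminus y_j$ (resp.\ $\alpha(j)\in y_j\setminus x_j$) and $\alpha(j)=\nu_{A_j}$ elsewhere produces an $\alpha/U$ lying in exactly one of $\Phi(x/U),\Phi(y/U)$.

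The real content is that $\Phi$ is a homomorphism. I would first record a family of Lo\'{s}‑style transfer lemmas stating that each ingredient of the $\bP_U$‑structure is detected $U$‑pointwise; the key one is
\[
\phi_{P_U}(a/U,\beta/U)\in\Phi(x/U) \iff \{j\st \phi_{A_j}(a(j),\beta(j))\in x_j\}\in U,
\]
with analogous statements for $\Con_U$, for $\vdash_U$, and for the induced $\cldn{}$‑closure in $\cP_U\Rightarrow\cP_U$. Preservation of $\ssk$ and $\sss$ then follows, since by Lo\'{s}'s Theorem~\ref{thm:Los} the basic combinators of the ultraproduct are $(\ssk^{\bA_j^+})_j/U$ and $(\sss^{\bA_j^+})_j/U$, and the transfer lemmas identify their $\Phi$‑images with $\ssk^{\bP_U^+}$ and $\sss^{\bP_U^+}$. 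Everything then reduces to preservation of application, i.e.\ to
\[
\beta/U\in\Phi(x/U)\cdot\Phi(y/U) \iff \{j\in J\st \beta(j)\in x_j\cdot_j y_j\}\in U .
\]

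I expect the reverse ($\Leftarrow$) inclusion of this last equivalence to be the main obstacle. Unfolding the i‑model application, a witness for $\beta/U\in\Phi(x/U)\cdot\Phi(y/U)$ is a single finite $a/U\subseteq_\rf\Phi(y/U)$ together with a $\phi_{P_U}$‑coded token of $\Phi(x/U)$, and such an $a/U$ has slices $a(j)$ of uniformly bounded cardinality. From $\{j\st\beta(j)\in x_j\cdot_j y_j\}\in U$, however, one obtains only, for each such $j$ \emph{separately}, a finite input $a^{(j)}\subseteq_\rf y_j$ realising $\beta(j)$, whose cardinalities $|a^{(j)}|$ are a priori unbounded in $j$. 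The crux is therefore to manufacture, from this $U$‑indexed family of componentwise witnesses, one genuine finite subset of $P_U$ whose $j$‑th slice serves as a witness for $U$‑many $j$ — reconciling the ``internal'' finiteness furnished by the ultraproduct with the ``external'' finiteness of the consistent sets of $\cP_U$. The forward inclusion is comparatively easy, precisely because a single finite $a/U$ already furnishes a uniform witness simultaneously for $U$‑many $j$. I would attack the hard direction by exploiting the entailment structure of $\cP_U\Rightarrow\cP_U$ and the $\cldn{}$‑closure to compress the componentwise witnesses to uniformly bounded size on a member of $U$; this compression step is the delicate heart of the argument and the place I expect to spend most effort.
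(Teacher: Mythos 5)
You have reconstructed exactly the map used in the paper's own proof: your $\Phi$ is the representative\-/independent formulation of the paper's
$f(x/U)=\{\ga/U \st \ga\in\prod_{j\in J}A_j,\ \forall j\in J.\ \ga(j)\in x(j)\}$,
and your treatment of well-definedness, of $\Phi(x/U)$ being a point, of injectivity (splitting $\{j \st x(j)\neq y(j)\}$ by the ultrafilter), and of the easy inclusion for application all correspond to steps that are present, or implicit, in the paper. The divergence is at the step you defer. The paper does not prove it either: in its displayed computation of $f(x/U)\cdot f(y/U)$, the passage from the line ``$\exists a\subseteq_\rf\prod_{j}A_j\ \dots$ and $\forall i\in J.\ \phi_{A_i}(a(i),\ga(i))\in x(i)$'' to the line ``$\forall j\in J.\ \exists a\subseteq_\rf y(j).\ \phi_{A_j}(a,\ga(j))\in x(j)$'' is asserted as an equality with no justification, and the inclusion of the second of these sets into the first is precisely your compression problem; the same issue is silently glossed in the computation of $f(\ssk^{(\prod_j\bA_j^+)/U})$. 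So your proposal is not missing a step that the paper supplies; it is missing the very step the paper also fails to supply.

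Unfortunately that step is not merely delicate: it is false, so no effort spent on the ``compression'' will close the gap for this map. Take $J=\go$, $U$ non-principal, and let each $\bA_j$ be a total pair (graph model) with infinitely many tokens. Put $y(j)=A_j$, choose $b_j\subseteq_\rf A_j$ with $|b_j|=j$ and a token $\gb_j\in A_j$, and put $x(j)=\{\phi_{A_j}(b_j,\gb_j)\}$. Then $x(j)\cdot y(j)=\{\gb_j\}$, so $\gb=\metal j.\gb_j$ gives $\gb/U\in\Phi\bigl((x\cdot y)/U\bigr)$. On the other hand, every finite subset of $P_U$ has the form $a/U$ for a finite set $a$ of sequences, and every slice obeys $|a(j)|\le|a|$; by injectivity of $\phi_{A_j}$, the condition $\phi_{A_j}(a(j),\gb(j))\in x(j)$ forces $a(j)=b_j$, hence $j=|b_j|\le|a|$. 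Thus $\{j\st\phi_{A_j}(a(j),\gb(j))\in x(j)\}$ is finite, hence not in $U$, so $\gb/U\notin\Phi(x/U)\cdot\Phi(y/U)$ (indeed with these data $\Phi(x/U)\cdot\Phi(y/U)=\emptyset$), and application is not preserved. The same obstruction already defeats preservation of $\ssk$: the point $\Phi(\ssk^{(\prod_j\bA_j^+)/U})$ contains classes $(\metal j.\phi_{A_j}(a_j,\phi_{A_j}(a_j,\gb_j)))/U$ with $|a_j|$ unbounded in $j$, and no such class lies in $\ssk^{\bP_U^+}$. Your closing diagnosis — the mismatch between the \emph{internal} finiteness of witnesses provided by the ultraproduct and the \emph{external} finiteness of the consistent sets of $\cP_U$ — is therefore not a difficulty to be overcome but a genuine obstruction: Theorem \ref{thm:embedding} cannot be established with the web of Definition \ref{def:ultra-i-web} and this map, and any correct proof would need a different construction, for instance a web whose consistent sets are themselves $U$-classes of $J$-indexed sequences of consistent sets, so that the relevant finiteness becomes internal.
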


\begin{proof}
The proof is rather technical and cumbersome. For this reason we state and prove a particular case that only deals with graph models.
We let $x,y,\ldots$ range over elements of $\prod_{j\in J}\bA_j^+$, so that $x(j) \in \bA_j^+$ is a point of the graph model $\bA_j$.
 We write $x/U$ for the equivalence class of $x$ w.r.t. the congruence on $\prod_{j\in J}\bA_j^+$ given by $x \sim_U y \iff \{j\in J\st x(j)=y(j)\}\in U$,
 i.e., $x/U = \{y \in \prod_{j\in J}\bA_j^+ \st x \sim_U y \}$.

Recall that $\sim_U$ is the relation on $\prod_{j\in J}A_j$ given by $\alpha \sim_U \beta  \iff \{j \in J \st \alpha(j) = \beta(j)\}\in U$.
 We define a map $f: (\prod_{j\in J} \bA_j^+)/U \to \bP_U^+$ as follows:

$$ f(x/U) = \{\ga/U \st \ga \in \prod_{j\in J}A_j,\ \forall j\in J.\ \ga(j) \in x(j)\} $$

It is easy to show that the definition of $f$ is independent of the choice of the representatives of $\sim_U$-equivalence classes as, for all $y \in x/U$,
 we have $\{j \in J \st y(j) = x(j) \} \in U$.

We prove that $f$ is injective. Suppose $x/U \neq y/U$ and let $Z = \{j \in J \st x(j) = y(j)\}$.
 Define $X = \{ k \in J \st x(k) \subseteq y(k) \}$ and $Y = \{ k\in J \st y(k) \subseteq x(k) \}$. Then $X \cap Y = Z \not\in U$.
 This means that it is not possible that both $X$ and $Y$ belong to the ultrafilter $U$.
 Assume that $X \not\in U$. Then for every $k\in J-X$ we have $x(k)\not\subseteq y(k)$, so that for each $k\in J-X$ there exists an element $\gamma_k \in A_k$
 such that $\gamma_k \in x(k)-y(k)$. Let $\gd \in \prod_{j\in J}A_j$ be an arbitrary sequence and let $\gb \in \prod_{j\in J}A_j$ be defined by
 $\gb(i) = \gamma_i$ for $i\in J-X$ and $\gb(i) = \gd(i)$ for $i\not\in J-X$. By definition of $f$ we have $\gb/U \in f(x/U)$, while
 $\gb/U \not\in f(y/U)$, so that $f(x/U) \neq f(y/U)$.

Now we prove that $f$ is homomorphism of combinatory algebras. We start proving that $f$ preserves application. We have
\[
\begin{array}{lcl}
f(x/U)\cdot f(y/U) & = & \{\ga/U \st \exists a/U \subseteq_\rf f(y/U).\ \phi_{P_U}(a/U,\ga/U)\in f(x/U) \} \\
  & = & \{\ga/U \st \exists a \subseteq_\rf \prod_{j\in J}A_j.\forall \gc \in a.\forall j\in J.\ \gc(j) \in y(j) \text{ and } \\
  &   & \forall i\in J.\ \phi_{A_i}(a(i),\alpha(i)) \in x(i) \} \\
  & = & \{\ga/U \st \forall j\in J.\exists a \subseteq_\rf y(j).\ \phi_{A_j}(a,\ga(j))\in x(j) \} \\ 
  & = & \{\ga/U \st \forall j\in J.\ \ga(j)\in \{\beta \in A_j \st \exists a \subseteq_\rf y(j).\ \phi_{A_j}(a,\beta)\in x(j)\}\} \\
  & = & \{\ga/U \st \forall j\in J.\ \ga(j)\in x(j)\cdot y(j)\} \\
  & = & f((x\cdot y)/U) \\
  & = & f(x/U \cdot y/U)
\end{array}
\]
We now regard the basic combinators. Recall that by definition for each $j \in J$ we have
 $\ssk^{\bA_j^+}= \Int{\lambda xy.x}^{\bA_j^+}= \{\phi_{A_j}(a,\phi_{A_j}(b,\gb)) \st \gb \in a \}$.
Then
\[ 
\begin{array}{lcl}
f(\ssk^{(\prod_{j\in J}\bA_j^+)/U}) & = & f((\ssk^{\prod_{j\in J}\bA_j^+})/U) \qquad\qquad\qquad\qquad\qquad\qquad\qquad\qquad\qquad\qquad\qquad\qquad\qquad \\
 & = & \{\ga/U \st \ga \in \prod_{j\in J}A_j,\ \forall j\in J.\ \ga(j) \in \ssk^{\bA_j^+}\} \\
 & = & \{\phi_{P_U}(a/\theta_U,\phi_{P_U}(b/U,\gb/U)) \st \gb/U \in a/U \} \\
 & = & \Int{\lambda xy.x}^{\bP_U^+} \\
 & = & \ssk^{\bP_U^+}
 \end{array}
\]
Similarly $f(\sss^{(\prod_{j\in J}\bA_j^+)/U}) = \sss^{\bP_U^+}$.
\end{proof}

We remark that in the general case in which all the $\bA_j$ ($j\in J$) and $\bP_U$ are i-webs the map \\
 $f: (\prod_{j\in J} \bA_j^+)/U \to \bP_U^+$ is defined as
 $f(x/U) = \cldn{\{\ga/U \st \ga \in \prod_{j\in J}A_j,\ \forall j\in J.\ \ga(j) \in x(j)\}}_{P_U}$.

We remarked at the end of Section \ref{upp-imodels} that the fip can be derived for subclasses by suitably modifying the general construction
 detailed for i-models. Also the upp holds for the various classes of models. Here we proved it for graph models, because it looks it looks very
 clear for this case, but the proof can be adapted (adding details and complication) to the other cases.

Summing up, graph models, pcs-models, Krivine models, filter models and in general i-models have both the fip and the upp. For this reason 
 Theorem \ref{thm:main-general} applies to all these classes, producing a minimal model in each case. It is known that there exist filter-easy terms
 \cite{Ale01} as well as graph-easy terms \cite{BaeBoe79} (for example $(\lambda x.xx)(\lambda x.xx)$), and every graph-easy term is also pcs-easy
 and Krivine-easy, since the latter classes contain the graph models. Therefore Theorem \ref{compact} and Theorem \ref{easy} both hold for all these
 classes, saying that each one of them induces a continuum of lambda-theories.

\section{Conclusions}\label{conclusions}

We have presented a method for proving that a given class of models of the $\lambda$-calculus has a minimal element, i.e., an element whose
 $\lambda$-theory is the intersection of all the $\lambda$-theories represented in the class. We have applied this method to the class of i-models, a subclass of Scott models defined in \cite{CarSal09}, containing several well-known
 instances of ``webbed'' models like the graph-models and the filter models living in the category of Scott domains.

Various extensions of this work can be explored, both toward the proof that the whole class of Scott models has the minimality property, and more
 generally toward the application of the method to other classes of models of the $\lambda$-calculus.

Concerning the former extension, a preliminary result would be the finite intersection property for the whole class of Scott models, the completion
 method described in Section \ref{applications} being adapted to i-models.

More generally, it is interesting to notice that webs, even beyond i-webs, are first-order axiomatisable, hence closed by ultraproducts (by the way,
 this observation is an alternative way of showing that Definition \ref{def:ultra-i-web} is sound). 
By providing a first-order axiomatisation of sentences like $\bA^+\vDash M \neq N$, for given terms $M,M$ and web $\bA$, we could invoke Lo\'{s}
 theorem for showing that $(\prod_{j\in J}\bA_j)/U)^+$ and  $(\prod_{j\in J}\bA_j^+)/U$ have the same theory, and hence for deriving a strong
 form of the ultraproduct property for the class of models corresponding to the considered webs. 
 
We conclude this section by providing an outline of a first-order axiomatisation of reflexive information systems.
Let $\cA= (A,\mathrm{Con}_A, \vdash_A, \nu_A)$  be an  information system.  $\cA$ can be defined as a first-order structure as follows: 
for every $n\geq 1$, let $C_n$ be an $n$-ary predicate and $R_{n+1}$ be an $(n+1)$-ary predicate whose intended meanings are:
$$C_n(\ga_1,\dots, \ga_n) \leftrightarrow \{\ga_1,\dots,\ga_n\}\in Con_\cA.$$
and
$$R_{n+1}(\ga_1,\dots, \ga_n,\gb) \leftrightarrow \{\ga_1,\dots,\ga_n\} \vdash_\cA \gb.$$
Then, it is very easy to axiomatise information systems as universal Horn formulas:
\begin{enumerate}
\item $\forall \ga. C_1(\ga)$;
\item $\forall \ga_1\dots\ga_n. C_n(\ga_1,\dots, \ga_n) \to C_k(\ga_{i_1},\dots, \ga_{i_k})$ if $k\leq n$ and $1\leq i_j\leq n$;
\item $\forall \ga_1\dots\ga_n\gb. R_{n+1}(\ga_1,\dots, \ga_n,\gb) \to C_{n+1}(\ga_1,\dots, \ga_n,\gb)$;
\item $\forall \ga_1\dots\ga_n\gb. R_{n+1}(\ga_1,\dots, \ga_n,\gb) \to R_{n+1}(\ga_{\gs(1)},\dots, \ga_{\gs(n)},\gb)$, for every permutation $\gs$;
\item $\forall \ga_1\dots\ga_n\gb_1\dots\gb_k\gamma. (\bigwedge_{1\leq i\leq k} R_{n+1}(\ga_1,\dots, \ga_n,\gb_i))\wedge R_{k+1}(\gb_1,\dots, \gb_k,\gamma) \to R_{n+1}(\ga_1,\dots, \ga_n,\gamma)$;
\item $\forall \ga_1\dots\ga_n. C_n(\ga_1,\dots, \ga_n) \to R_{n+1}(\ga_1,\dots, \ga_n,\ga_i)$;
\item $R_1(\gn)$, for a constant $\gn$.
\end{enumerate}
In a similar but more complicated way it is possible to find a first-order axiomatisation of what is an exponent and a reflexive object in the category $\Inf$. Thus, an untraproduct of reflexive information systems is again a reflexive information system. It deserves to be studied how first-order closure properties of information systems can be transferred to the category $\SD$ of Scott domains.


\bibliographystyle{eptcs}
\bibliography{bibfile}

\end{document}